%
%
\documentclass[12pt]{article}
\usepackage{amsfonts,amsthm,color}
\newtheorem{theorem}{\indent Theorem}[section]

\newtheorem{lemma}[theorem]{\indent Lemma}

\newtheorem{assumption}[theorem]{\indent Assumption}
\setcounter{section}{0}

%

\def\a{\mathbf{a}}
\def\n{\mathbf{n}}

\def\R{\mathbb{R}}

\def\Re{\mathop{\rm Re}\nolimits}
\def\Im{\mathop{\rm Im}\nolimits}

\def\curl{\mathop{\rm curl}\nolimits}
\def\sgn{\mathop{\rm sgn}\nolimits}

\def\dist{\mathop{\rm dist}\nolimits}
\def\diam{\mathop{\rm diam}\nolimits}

\def\tr{\mathop{\rm tr}\nolimits}
\def\min{\mathop{\rm min}\limits}

\usepackage{amsmath}	
\begin{document}
\begin{center}
 {
\Large\bf
Schr\"odinger operators with random 
$\delta$ magnetic fields
}
\vspace{0.2cm}

by Takuya MINE\footnote{
Faculty of Arts and Sciences,
Kyoto Institute of Technology,
Matsugasaki,
Sakyo-ku,
Kyoto 606-8585,
Japan. E-mail: mine@kit.ac.jp}
and
Yuji NOMURA\footnote{
Graduate School of Material Science,
University of Hyogo, Shosha, Himeji 671-2280, Japan.
E-mail: y.nomura@sci.u-hyogo.ac.jp}
\vspace{0.2cm}
\end{center}

{\bf Abstract.} 
We shall consider the Schr\"odinger 
operators on $\R^2$ with 
random $\delta$ magnetic fields.
Under some mild conditions on the positions 
and the fluxes of the $\delta$-fields,
we prove the spectrum coincides with $[0,\infty)$
and the integrated density of states (IDS)
decays exponentially at the bottom of the spectrum
(Lifshitz tail),
by using the Hardy type inequality
by Laptev-Weidl \cite{L-W}.
We also give a lower bound for IDS at the bottom of the spectrum.

\section{Introduction}
We consider the Schr\"odinger operators on $\R^2$
with random magnetic fields
\begin{displaymath}
 {\cal L}_\omega = \left(\frac{1}{i}\nabla - \a_\omega\right)^2,
\end{displaymath}
where $\omega$ is an element from some probability space
$(\Omega, \mathbf{P})$
and the vector-valued function
$\a_\omega(x) = (a_{\omega,1}(x), a_{\omega,2}(x))$ 
($x=(x_1,x_2)\in \mathbb{R}^2$) is 
the magnetic vector potential dependent on $\omega$.
The magnetic field corresponding to $\a_\omega$
is given by
$\curl \a_\omega 
= \partial_1 a_{\omega,2}
 - \partial_2 a_{\omega,1}$
$(\partial_j=\partial/\partial x_j)$
and we assume
\begin{equation}
 \curl \a_\omega
 = \sum_{\gamma\in \Gamma_\omega}
   2\pi \alpha_\gamma(\omega) \delta_\gamma
 \label{1-1}
\end{equation}
in the distribution sense,
where $\Gamma_\omega$ is a discrete set in $\mathbb{R}^2$
dependent on $\omega$ without accumulation points in $\mathbb{R}^2$,
$\alpha(\omega) = \{\alpha_\gamma(\omega)\}_{\gamma\in \Gamma_\omega}$
is a sequence of real numbers satisfying $0 \leq \alpha_\gamma(\omega) <1$
and dependent on $\omega$,
and $\delta_\gamma$ is the Dirac measure supported
on the point $\gamma$.
For any given $(\Gamma_\omega, \alpha(\omega))$,
we can construct vector potential 
$\a_\omega\in C^\infty(\mathbb{R}^2\setminus\Gamma_\omega;\mathbb{R}^2)$ 
satisfying (\ref{1-1}) (see (\ref{1-2}) below).
The assumption $0 \leq \alpha_\gamma(\omega) <1$ loses no generality,
because we can shift the value of $\alpha_\gamma$ by any integer value
by using suitable gauge transform (see Lemma \ref{gauge} below).

Before stating our assumptions,
we prepare some notation used in the present paper.
For $S\subset \mathbb{R}^2$, $x\in \mathbb{R}^2$,
and $r>0$,
let
$S+x = \{s + x\mid s\in S\}$ and
$r S = \{r s \mid s\in S\}$.
For $k\geq 0$, let 
\begin{displaymath}
Q_k = 
\left\{ (x_1,x_2)\in \mathbb{R}^2\mid
 -k - \frac{1}{2} \leq x_j < k+\frac{1}{2}\ (j=1,2)\right\},
\end{displaymath}
which is a square with edge length $2k+1$ 
centered at the origin.
Especially $Q_0$ is a unit square centered at the origin.
The boundary of a set $S$ is denoted by $\partial S$.
The open disc of radius $r$ centered at $x$
is denoted by $B_{x}(r)$, that is,
\begin{displaymath}
B_{x}(r)=\{y\in \mathbb{R}^2\mid |y-x|<r\}. 
\end{displaymath}

Our assumption
is as follows.
\begin{assumption}
\label{assumptions}
Let $(\Omega, \mathbf{P})$ be a probability space,
$\Gamma_\omega$ a discrete set in $\mathbb{R}^2$
dependent on $\omega\in \Omega$ without accumulation points in $\mathbb{R}^2$,
and $\alpha(\omega)=\{\alpha_\gamma(\omega)\}_{\gamma\in \Gamma_\omega}$
a sequence of real numbers with $0\leq \alpha_\gamma(\omega)<1$
dependent on $\omega\in \Omega$.
For a Borel set $E$ in $\R^2$, put
\begin{displaymath}
  \Phi_\omega(E) 
 = \sum_{\gamma\in \Gamma_\omega \cap E} \alpha_\gamma(\omega).
\end{displaymath}
We assume the following conditions (i)-(vi).

 \begin{enumerate}
 \item For any Borel set $E$ in $\R^2$,
 the random variable $\Phi(E):\omega\mapsto\Phi_\omega(E)$
is measurable with respect to $\omega \in \Omega$.
 \item
 For any finite distinct points
$\{n_j\}_{j=1}^J$ with $n_j\in \mathbb{Z}^2$,
and for any Borel sets $\{E_j\}_{j=1}^J$ with $E_j\subset n_j + Q_0$,
 the random variables
 $\{\Phi(E_j)\}_{j=1}^J$
are independent.

  \item For any Borel set $E\subset Q_0$,
 the random variables
 $\{\Phi(E+n)\}_{n\in \mathbb{Z}^2}$
 are identically distributed.

 \item 
The mathematical expectation $\mathbf{E}[\Phi(Q_0)]$
is positive and finite.
The variance $\mathbf{V}[\Phi(Q_0)]$ is finite.

\item 
$\Phi_\omega(\partial Q_0)=0$ almost surely.

\item 
One of the following two conditions (a) or (b) holds.
\begin{enumerate}
 \item 
There exists a positive constant $c$
with $0<c\leq 1$ independent of $\omega$ such that
the probability of the event 
`the following two conditions 
(\ref{small_flux}) and (\ref{separated}) simultaneously hold'
is positive for any $\epsilon>0$.
\begin{eqnarray}
\label{small_flux}
&&\Phi_\omega(Q_0)=\sum_{\gamma\in \Gamma_\omega\cap Q_0}\alpha_\gamma<
\epsilon,\\
&&B_\gamma(c\sqrt{\alpha_\gamma})\cap
B_{\gamma'}(c\sqrt{\alpha_{\gamma'}})=\emptyset,\quad
B_\gamma(c\sqrt{\alpha_\gamma})\cap \partial Q_0 =\emptyset\nonumber\\
&&
\quad \mbox{for every }\gamma,\gamma'\in \Gamma_\omega \cap Q_0
\mbox{ with }\gamma\not=\gamma'.
\label{separated}
\end{eqnarray}

 \item 
The probability of the event
\begin{equation}
\label{small_square_root_flux}
\sum_{\gamma\in \Gamma_\omega\cap Q_0}\sqrt{\alpha_\gamma}< \epsilon 
\end{equation}
is positive for any $\epsilon>0$.
\end{enumerate}
\end{enumerate}
\end{assumption}
\noindent
In a part of our main result,
we assume a stronger condition as follows. 
\begin{assumption}
\label{assumption2}
In addition to Assumption \ref{assumptions},
there exist positive constants 
$c_1$ and $\delta_1$ such that
\begin{eqnarray}
 \label{assumption2_1}
&&\mathbf{P}\left\{
\mbox{
 (\ref{small_flux}) and (\ref{separated}) hold
}
\right\}\geq c_1 \epsilon^{\delta_1}\quad
 \mbox{(if (vi)(a) holds)},\\
 \label{assumption2_2}
&&\mathbf{P}\left\{
\mbox{
 (\ref{small_square_root_flux}) holds
}
\right\}\geq c_1 \epsilon^{\delta_1}\quad
 \mbox{(if (vi)(b) holds)}
\end{eqnarray}
for sufficiently small $\epsilon>0$,
where $\epsilon$ is the one in (\ref{small_flux}) 
or (\ref{small_square_root_flux}), respectively.
\end{assumption}

\noindent
The assumption (\ref{small_flux}) means 
the magnetic flux through $Q_0$ can be arbitrarily small,
and (\ref{separated}) means the points $\Gamma_\omega$
are separated farther than 
a constant multiple of 
the magnetic length $\sqrt{\alpha_\gamma}$
as the flux tends to $0$.
The assumption (\ref{small_square_root_flux}) 
is independent of the positions of the points $\Gamma_\omega$,
but the restriction on the flux is stronger than (\ref{small_flux}),
since $0\leq \alpha_\gamma\leq\sqrt{\alpha_\gamma}\leq 1$.
If the number of $\Gamma_\omega\cap Q_0$ is bounded
by a constant independent of $\omega$,
then (\ref{small_flux}) implies (\ref{small_square_root_flux})
by the Schwarz inequality.
These conditions guarantee the spectrum 
of our Hamiltonian is $[0,\infty)$ (Theorem \ref{theorem_spectrum}).

There are numerous examples satisfying 
Assumption \ref{assumptions} or \ref{assumption2}.
We list some typical examples below.
\begin{enumerate}
 \item {\bf Perturbation of a lattice.}
Let $\Gamma_\omega = \{n + f_n(\omega)\}
_{n\in \mathbb{Z}^2}$,
where $\{f_n\}$
are independently, identically distributed (\textit{abbrev.}\ i.i.d.)
$\mathbb{R}^2$-valued random variables with values in 
the interior of $Q_0$.
The fluxes $\{\alpha_\gamma\}$ are $[0,1)$-valued i.i.d.\ random variables
independent of $\{f_n\}$, satisfying
$\mathbf{E}[\alpha_\gamma]>0$ and
\begin{displaymath}
\mathbf{P}\{\alpha_\gamma < \epsilon\}>0\quad
\mbox{for any }\epsilon>0.
\end{displaymath}
Then Assumption \ref{assumptions} is satisfied.
Moreover, if additionally
\begin{displaymath}
\mathbf{P}\{\alpha_\gamma < \epsilon\}\geq c_1 \epsilon^{\delta_1}\quad
\mbox{for sufficiently small }\epsilon>0
\end{displaymath}
for some positive constants $c_1$ and $\delta_1$,
then Assumption \ref{assumption2} is satisfied.

 \item {\bf Poisson model.}
The random set $\Gamma_\omega$ is the Poisson configuration
with intensity measure $\rho\, dx$,
where $\rho$ is some positive constant,
i.e.\ the following holds (see e.g. \cite{Rei, A-I-K-N}).
\begin{enumerate}
 \item For any Borel set $E$ with finite Lebesgue measure $|E|$,
\begin{displaymath}
 \mathbf{P}\left\{{\#(\Gamma_\omega \cap E)}=j\right\}=e^{-\rho|E|}\frac{(\rho|E|)^j}{j!}
\quad (j=0,1,2,\ldots){,}
\end{displaymath}
{where $\# S$ denotes the number of the points in the set $S$.}
 \item For any disjoint Borel sets $E_1,\ldots,E_n$ with finite
Lebesgue measure,
the random variables $\{{\#(\Gamma_\omega \cap E_j)}\}_{j=1}^n$ are independent.
\end{enumerate}
The fluxes $\{\alpha_\gamma\}$ are i.i.d.\ random variables
independent of $\Gamma_\omega$ and satisfying
$\mathbf{E}[\alpha_\gamma]>0$
(${\alpha_\gamma}$ can be a constant sequence).
Then Assumption \ref{assumption2} holds, since
\begin{displaymath}
 \mathbf{P}\{\Phi_\omega(Q_0)=0\}{\geq} e^{-\rho}>0.
\end{displaymath}

 \item {\bf Accumulating lattice.}
This is somewhat an artificial example which satisfies
(vi)(a) of Assumption \ref{assumptions}
but does not satisfy (vi)(b).
Considering the i.i.d.\ assumption 
((ii) and (iii) of Assumption \ref{assumptions}),
we give the distribution of $\Gamma_\omega$ only in $Q_0$
and the distribution of  $\alpha_\gamma(\omega)$ for $\gamma \in \Gamma_\omega\cap Q_0$.
Let $\Omega_0$ be the set of the positive integers
with the probability measure $\mathbf{P}\{m\}=6/(m\pi)^2$
for $m \in \Omega_0$.
For $m\in \Omega_0$, we define
\begin{eqnarray*}
&& \Gamma_m
= \left\{\left(
\frac{n_1}{2m+1},\frac{n_2}{2m+1}\right)
\mid n_j\in \mathbb{Z},\ |n_j|\leq m\ (j=1,2)
\right\},\\
&& \alpha_\gamma(m) = \frac{1}{(2m+1)^3}\quad
(\gamma \in \Gamma_m).
\end{eqnarray*}
Then, we have
$\Phi_m(Q_0)=(2m+1)^{-1}$ and
\begin{displaymath}
 \mathbf{P}
\left\{\Phi_m(Q_0)<\epsilon\right\}
=
\frac{6}{\pi^2}\sum_{m>(\epsilon^{-1}-1)/2} \frac{1}{m^2}>
c_1 \epsilon
\end{displaymath}
for some positive constant $c_1$ and sufficiently small $\epsilon>0$.
Moreover, since $\sqrt{\alpha_\gamma(m)} =(2m+1)^{-3/2}$ and
$\min_{\gamma\not=\gamma'}|\gamma-\gamma'|=(2m+1)^{-1}$, 
(\ref{separated}) always holds if we take $c=1/2$.
Thus, for small $\epsilon>0$,
(\ref{assumption2_1}) holds with $\delta_1=1$.
But
\begin{displaymath}
 \sum_{\gamma\in \Gamma_m} \sqrt{\alpha_\gamma}
=(2m+1)^{1/2}\geq \sqrt{3}
\end{displaymath}
for every $m\in \Omega_0$, so
the probability of the event 
(\ref{small_square_root_flux}) is $0$
for any $0<\epsilon<\sqrt{3}$.
\end{enumerate}
An example satisfying (vi)(b) but not satisfying (vi)(a)
can be more easily constructed,
by considering the two-point fields approaching very fast
to each other as the fluxes tend to $0$.

According to \cite[section 4]{G-S},
we can construct the vector potential $\a_\omega$
satisfying (\ref{1-1}) as follows.
For notational convenience,
we identify $x=(x_1,x_2)\in \mathbb{R}^2$ with
$z=x_1+ix_2\in \mathbb{C}$.
If a meromorphic function
$\phi_\omega(z)$ has poles only on $\Gamma_\omega$
and the principal part  of $\phi_\omega$
at $z=\gamma$ is $\alpha_\gamma/(z-\gamma)$,
then the Cauchy-Riemann relation 
and the distributional equality $\Delta \log|z-\gamma|=2\pi \delta_\gamma$
imply the vector potential
\begin{equation}
\label{1-2}
 \a_\omega = (\Im \phi_\omega, \Re \phi_\omega)
\end{equation}
satisfies (\ref{1-1}).
Such a meromorphic function $\phi_\omega$
always exists by the Mittag--Leffler theorem.
Under Assumption \ref{assumptions},
the function $\phi_\omega$ is explicitly given by
\begin{displaymath}
 \phi_\omega (z)= \frac{\alpha_0(\omega)}{z} + 
               \sum_{\gamma\in \Gamma_\omega\setminus\{0\}}
     \alpha_\gamma (\omega)
     \left(\frac{1}{z-\gamma } + \frac{1}{\gamma}
          + \frac{z}{\gamma^2} \right),
\end{displaymath}
where we put $\alpha_0(\omega)=0$ if $0 \not\in\Gamma$.
We can prove that the sum in the above formula
converges almost surely
under Assumption \ref{assumptions}
(a similar argument is seen in \cite[Proposition 4.1]{M-N}).

We denote the Friedrichs extension of the operator
${\cal L}_\omega$ with the operator domain $C_0^\infty(\R^2\setminus \Gamma_\omega)$
by $H_\omega$,
where $C_0^\infty(U)$ denotes the set of 
the compactly supported smooth functions whose support is contained in $U$.
The operator $H_\omega$ is a non-negative 
self-adjoint operator on $L^2(\mathbb{R}^2)$,
and the operator domain $D(H_\omega)$ of $H_\omega$ 
is given by
\begin{eqnarray}
&&D(H_\omega)=\{u \in L^2(\mathbb{R}^2)\mid 
 {\cal L}_\omega u \in L^2(\R^2),\nonumber\\
&&\hspace{4.2cm} \limsup_{x\rightarrow \gamma} |u(x)|<\infty\quad
 \mbox{for any }\gamma\in \Gamma_\omega
\},
 \label{bdry}
\end{eqnarray}
where the derivative ${\cal L}_\omega u$ is interpreted 
in the sense of the Schwartz distribution on 
$\mathbb{R}^2\setminus\Gamma_\omega$.

For the spectrum $\sigma(H_\omega)$
of $H_\omega$, we obtain the following result.
\begin{theorem}
 \label{theorem_spectrum}
Under Assumption \ref{assumptions},
we have $\sigma(H_\omega) = [0,\infty)$ almost surely.
\end{theorem}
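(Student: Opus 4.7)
My plan is to prove the two inclusions in $\sigma(H_\omega)=[0,\infty)$ separately. The inclusion $\sigma(H_\omega)\subseteq[0,\infty)$ is immediate because $H_\omega$ is the Friedrichs extension of a non-negative form. For the reverse inclusion I would first reduce to fixed energies: the $\mathbb{Z}^2$-action by translations is measure-preserving by (iii) and ergodic (indeed Bernoulli) by the independence on disjoint cubes (ii), and $H_\omega$ is covariant under the associated magnetic translations, so by the standard Kirsch--Martinelli argument $\sigma(H_\omega)$ is almost surely equal to a deterministic set $\Sigma$. It then suffices to show $\mathbf{P}\{\lambda\in\sigma(H_\omega)\}>0$ for every fixed $\lambda\geq 0$.

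The heart of the proof is the construction of an approximate eigenfunction (Weyl sequence) at $\lambda$ on a positive-probability event. Given $\delta>0$, pick $\eta\in C_0^\infty(B_0(1))$ with $\eta=1$ on $B_0(1/2)$ and $R$ so large that $\eta((\cdot-y)/R)e^{i\sqrt{\lambda}(x_1-y_1)}$ is a $\delta$-approximate eigenfunction of $-\Delta$ at $\lambda$. Condition (vi) applied to a single cube, combined with independence on disjoint cubes (ii) and stationarity (iii), implies that the event that every unit cube intersecting $B_y(R+1)$ satisfies (\ref{small_flux})--(\ref{separated}) (or (\ref{small_square_root_flux})) with parameter $\varepsilon\ll\delta/R^2$ is a product of $O(R^2)$ independent positive-probability events, hence has positive probability for a suitable translation $y\in\mathbb{Z}^2$. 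On this event I would choose thin outward slits $L_\gamma$ from each $\gamma\in\Gamma_\omega\cap B_y(R+1)$ so that $U=B_y(R+1)\setminus\bigcup_\gamma L_\gamma$ is simply connected, and $\a_\omega$ admits a single-valued primitive $\theta_\omega$ on $U$. The candidate Weyl function is
\begin{displaymath}
 \psi(x)=h(x)\,e^{i(\sqrt{\lambda}(x_1-y_1)+\theta_\omega(x))},\qquad h(x)=\eta\!\left(\tfrac{x-y}{R}\right)\zeta(x)\!\!\prod_{\gamma\in\Gamma_\omega\cap B_y(R+1)}\!\!\chi_\gamma(x),
\end{displaymath}
where $\chi_\gamma$ is a radial logarithmic cutoff vanishing on $B_\gamma(r_\gamma)$ with $r_\gamma\sim\sqrt{\alpha_\gamma}$, and $\zeta$ vanishes in thin tubes around the slits. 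The gauge identity $((1/i)\nabla-\a_\omega)(e^{i\theta_\omega}g)=e^{i\theta_\omega}((1/i)\nabla g)$ valid on $U$ then reduces $\|(H_\omega-\lambda)\psi\|/\|\psi\|$ to a free-Laplacian quantity depending only on the cutoffs appearing in $h$.

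\emph{Main obstacle.} The crux is to make the Dirichlet-type costs of the cutoffs $\chi_\gamma$ and $\zeta$ small enough to yield a genuine Weyl sequence. The singular behavior $|\a_\omega(x)|\sim\alpha_\gamma/|x-\gamma|$ is not locally $L^2$, which forces $h$ to vanish in an inner disc of radius $r_\gamma$ around each $\gamma$; the inner and outer scales of the logarithmic cutoff must be carefully balanced against the local flux $\alpha_\gamma$. In case (vi)(a) the separation (\ref{separated}) permits disjoint cutoffs with $r_\gamma\sim\sqrt{\alpha_\gamma}$, giving a total cost I expect to control by $\Phi_\omega(B_y(R+1))<\varepsilon$. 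In case (vi)(b) the Laptev--Weidl Hardy inequality \cite{L-W} is the natural quantitative ingredient, yielding cutoff costs controlled by $\sum\sqrt{\alpha_\gamma}<\varepsilon$. In both cases condition (vi) supplies exactly the smallness needed, while the slit cutoff $\zeta$ contributes a harmonic-measure-type cost that can be made negligible by tuning the tube widths.
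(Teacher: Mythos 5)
Your high-level reduction (non-negativity for $\sigma(H_\omega)\subset[0,\infty)$, ergodicity to pass to a deterministic set, Weyl sequences supported where the local flux is small on a positive-probability event) matches the paper's philosophy. But your construction of the Weyl function is genuinely different from the paper's, and as written it has gaps that I do not think can be closed without essentially rediscovering the paper's device.

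The key idea you are missing is that the paper never uses a slit/gauge transform at all. Writing $\a_\omega=(\Im\psi,\Re\psi)$ with $\psi(z)=\sum_{\gamma\in\Gamma\cap Q}\alpha_\gamma/(z-\gamma)$, the paper introduces the \emph{single-valued real} factor
\begin{displaymath}
\Psi(z)=\prod_{\gamma\in\Gamma\cap Q}|z-\gamma|^{\alpha_\gamma},
\end{displaymath}
which is the modulus (not the phase) of the multivalued holomorphic function $\prod(z-\gamma)^{\alpha_\gamma}$. Because it is single-valued and vanishes at each $\gamma$, the trial function $v_k=\chi_k\,\Psi\,e^{i x_1\xi}$ needs no slits and no inner cutoffs near the $\gamma$'s; $v_k\in D(H_\omega)$ holds directly from (\ref{bdry}) since the singularity of ${\cal L}v_k$ at $\gamma$ is only $O(|z-\gamma|^{\alpha_\gamma-1})\in L^2$. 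The central identity is the Wirtinger factorization ${\cal L}=\Psi^{-1}(2\partial_{\bar z})\Psi^2(-2\partial_z)\Psi^{-1}$, which turns the error into expressions involving $\Psi\psi$ whose $L^2$ norm over $Q$ is controlled explicitly by $\Phi_\omega(Q)$ (or $\sum\sqrt{\alpha_\gamma}$) via (\ref{2-20}) and (\ref{2-23}). Your slit-and-cutoff scheme instead tries to carry the \emph{phase} $\theta_\omega=\arg\prod(z-\gamma)^{\alpha_\gamma}$, which is multivalued; this is what forces the slits and their cutoff $\zeta$, and those cutoffs are precisely what cannot be controlled in general.

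Concretely, the gaps are: (1) The slit cutoff $\zeta$ cannot have small $L^2$ norm of $\Delta\zeta$: a (smoothed) logarithmic cutoff in the transverse coordinate $\rho\in(w_1,w_2)$ has $\Delta\zeta\sim\rho^{-2}/\log(w_2/w_1)$, whose $L^2$ norm blows up as $w_1\to0$; a linear cutoff is even worse. Since the Weyl criterion requires $\psi_n\in D(H_\omega)$ and $\|(H_\omega-\lambda)\psi_n\|\to0$, second derivatives of the cutoffs enter, and this term cannot be made small. (Passing to a quadratic-form version of the Weyl criterion to avoid $\Delta\zeta$ is not what you propose, would require a separate argument, and is nowhere in the paper.) (2) Your inner cutoffs $\chi_\gamma$ with inner scale $r_\gamma\sim\sqrt{\alpha_\gamma}$ have Dirichlet cost $\sim1/\log(r_{\rm out}/r_\gamma)$; under (vi)(a) the separation condition only gives $r_{\rm out}\lesssim c\sqrt{\alpha_\gamma}$, so this cost is $O(1)$ per point, and the number of points in $Q$ is \emph{unbounded} under the hypotheses (see the accumulating-lattice example following Assumption \ref{assumption2}), so the total cost diverges. (3) The Laptev--Weidl inequality (\ref{lw}) is a \emph{lower} bound on the magnetic Dirichlet form; it is used in the paper to prove the Lifshitz tail \emph{upper} bound on $N(E)$, but it cannot supply the \emph{upper} bound on $q(\psi_n)/\|\psi_n\|^2$ that a Weyl sequence at small energies requires, so invoking it for case (vi)(b) here is a misdirection.

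In short, you have the right reduction and the right event, but you should replace the slit/phase construction by the modulus factor $\Psi$ and use the operator factorization (\ref{2-5}); this eliminates the cutoff pathologies at the poles and at the slits entirely and makes the error term $\|\Psi\psi\|_{L^2(Q)}$ the only quantity that has to be estimated.
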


There are numerous results similar to Theorem \ref{theorem_spectrum}
in the theory of random Schr\"odinger operators
(see e.g.\ \cite{C-L, S1, A-I-K-N, Ki-Ma, Ki1}).
Especially,
Borg \cite[Theorem 4.3.1]{B} proves the special case of 
Theorem \ref{theorem_spectrum},
in the case $\Gamma_\omega$ is a non-random lattice.
Nevertheless,
the proof of Theorem \ref{theorem_spectrum} is not trivial
from the following reason.

The main strategy to prove the almost sure spectrum $\Sigma= [0,\infty)$
used in the known results is as follows.
First, we find a class of the \textit{admissible operators} ${\cal A}$,
such that $\Sigma$ is expressed as
the closure of the union of the spectrum of 
all the operators belonging to ${\cal A}$
(see e.g.\ Kirsch-Martinelli \cite[Theorem 3]{Ki-Ma}
or Kirsch \cite[Page 305, Theorem 2]{Ki1}).
Next, we find a sequence of operators $H_n$ in ${\cal A}$ such that
$H_n$ converges to the free Laplacian $-\Delta$ in the strong resolvent
sense. 
This implies $\Sigma\supset [0,\infty)$
by \cite[Theorem VIII.24]{ReSi1}, and the converse
inclusion is trivial if the operators in ${\cal A}$ are non-negative.

However, under Assumption \ref{assumptions}, 
finding such a sequence $H_n$ is not an easy task,
because the operator domain $D(H_\omega)$ depends
on the lattice $\Gamma_\omega$ and the flux $\alpha_\gamma$
due to the singularity of the vector potential $\a_\omega$
(the well-known criterion on the strong resolvent convergence 
\cite[Theorem VIII.25]{ReSi1} requires the operators $H_n$
have a common operator core).
Moreover, the unboundedness of the number of the lattice points 
in the basic cell $Q_0$ (such as our example (iii)) 
makes the problem more difficult.

In order to overcome this difficulty,
we directly construct the Weyl sequence for any energy $E \geq 0$, 
i.e., the sequence $\{u_n\}\subset D(H_\omega)$ 
such that $\|(H_\omega -E)u_n\|\to 0$
and $\|u_n\|=1$ ($\|\cdot\|$ denotes the $L^2$-norm), 
by multiplying the factor
\begin{displaymath}
  \Psi(z) = \prod_{\gamma \in \Gamma \cap Q}
 |z - \gamma|^{\alpha_\gamma}
\end{displaymath}
($Q$ is some cube) to the eigenfunction 
$e^{i\sqrt{E} x_1 }$ of $-\Delta$.
Under our small flux assumption,
we can almost surely choose the cube $Q$ so that the magnetic flux on $Q$
is arbitrarily small, and then we can construct 
the desired sequence.
For the detail, see section 2.

Next we shall introduce the integrated density of states (IDS) 
for the operator $H_\omega$.
Let $H_{\omega,N}^k$ be the self-adjoint realization of
the operator ${\cal L}_\omega$ on $L^2(Q_k)$
with the Neumann boundary conditions
$\left(\frac{1}{i}\nabla - \a_\omega\right)u \cdot \n =0$
on $\partial Q_k$
($\n$ is the unit outer normal at the boundary point;
notice also that $\partial Q_k\cap \Gamma_\omega=\emptyset$ almost surely
by (v) of Assumption \ref{assumptions}).
For $E\in \R$, let
$N_{\omega,N}^k (E)$ be the number of the eigenvalues of
$H_{\omega,N}^k$ 
less than or equal to $E$,
and
\begin{displaymath}
 N(E) = \lim_{k\rightarrow \infty}
  \frac{1}{|Q_k|} N_{\omega,N}^k(E),
\end{displaymath}
where $|\cdot|$ denotes the Lebesgue measure.
We can prove the limit $N(E)$ exists almost surely 
and independent of $\omega$,
by Akcoglu-Krengel's superadditive ergodic theorem
(see \cite{C-L, A-K}).

Our second result is the following inequality,
known as the {\it Lifshitz tail} estimate
in the theory of the random Schr\"odinger operators.
\begin{theorem}
 \label{theorem_main}
\begin{enumerate}
 \item 
Suppose Assumption \ref{assumptions} holds.
Then, there exist positive constants $C$ and $E_0$
independent of $\omega$ and $E$, such that
\begin{equation}
\label{main1}
  N(E) \leq e^{-\frac{C}{E}}
\end{equation}
for any $E$ with $0<E<E_0$.

 \item 
Suppose Assumption \ref{assumption2} holds.
Then we have
\begin{equation}
 \label{main2}
\lim_{E \to +0}\frac{\log|\log N(E)|}{\log E}=-1.
\end{equation}
\end{enumerate}
\end{theorem}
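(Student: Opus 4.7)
The plan is to adopt the Lifshitz-tail strategy at the scale $L=L_E$ of order $E^{-1/2}$: translate the event $\{\lambda_1(H_{\omega,N}^L)\leq E\}$ into a rare configuration of fluxes in $Q_L$ and control its probability via the unit-cube independence built into Assumption \ref{assumptions}(ii), (iii). The decisive analytic input is the Laptev--Weidl magnetic Hardy inequality \cite{L-W}, which quantifies how much the magnetic form is raised around each Aharonov--Bohm singularity; this is what converts ``many non-trivial fluxes in $Q_L$'' into a positive lower bound on $\lambda_1(H_{\omega,N}^L)$.

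For part (i), I would first combine Neumann bracketing with a standard magnetic Weyl estimate to obtain, for $L=L_E\sim E^{-1/2}$,
\[
N(E)\ \leq\ \frac{\mathbf{E}[N_{\omega,N}^L(E)]}{|Q_L|}\ \leq\ C\,E\,\mathbf{P}\{\lambda_1(H_{\omega,N}^L)\leq E\}.
\]
The deterministic core is a spectral lemma: there exist $\eta,c_0>0$, independent of $\omega$ and $L$, such that
\[
\lambda_1(H_{\omega,N}^L)\ \geq\ \frac{c_0}{L^2}\cdot\#\bigl\{n\in \Z^2\cap Q_L : \Phi_\omega(Q_0+n)\geq \eta\bigr\}.
\]
This would be proved by inserting a partition of unity subordinate to the unit sub-cubes into the Rayleigh quotient and applying Laptev--Weidl on each sub-cube carrying flux $\geq\eta$. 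From Assumption \ref{assumptions}(iv), $p:=\mathbf{P}\{\Phi(Q_0)\geq\eta\}>0$ for some $\eta>0$; by (ii), (iii) the counting random variable above is $\mathrm{Binomial}((2L+1)^2,p)$, and a Chernoff bound then yields $\mathbf{P}\{\lambda_1(H_{\omega,N}^L)\leq E\}\leq e^{-cL^2}=e^{-c/E}$, which closes (\ref{main1}).

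For part (ii), the upper bound in (\ref{main2}) is already supplied by (i); for the matching lower bound I would construct low-lying Dirichlet eigenstates on $Q_{L_E}$. Consider the event that every unit sub-cube of $Q_{L_E}$ satisfies (\ref{small_flux})--(\ref{separated}) in case (vi)(a), or (\ref{small_square_root_flux}) in case (vi)(b), with parameter $\epsilon$ of order $E^{1/2}$. On this event, following the construction in Section 2, multiplying the free Dirichlet ground state on $Q_{L_E}$ by $\Psi(z)=\prod_{\gamma\in\Gamma_\omega\cap Q_{L_E}}|z-\gamma|^{\alpha_\gamma}$ yields a Dirichlet trial function for $H_{\omega,D}^{L_E}$ whose Rayleigh quotient is of order $1/L_E^2\sim E$. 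By independence and Assumption \ref{assumption2}, the event has probability at least $(c_1\epsilon^{\delta_1})^{(2L_E+1)^2}\geq e^{-C/E}$; the bound $N(E)\geq \mathbf{E}[N_{\omega,D}^{L_E}(E)]/|Q_{L_E}|$ then gives $N(E)\geq c\,e^{-C'/E}$. Combined with (\ref{main1}), this produces $\log|\log N(E)|=\log(1/E)+O(1)$ as $E\to+0$, which is (\ref{main2}).

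The main technical obstacle is the spectral lemma of paragraph two. The Laptev--Weidl inequality is stated for smooth compactly supported functions vanishing at each flux point and treats a single Aharonov--Bohm singularity in a simply connected region, whereas our Neumann eigenfunctions are merely bounded at the flux points and the form lives on $Q_L$ with Neumann data on $\partial Q_L$. In case (vi)(a) the magnetic-length separation provided by (\ref{separated}) allows the inequality to be applied on disjoint discs $B_\gamma(c\sqrt{\alpha_\gamma})$; in case (vi)(b) the smallness of $\sum\sqrt{\alpha_\gamma}$ makes the local inequalities combine cleanly. Performing this localisation while keeping $c_0$ independent of $L$ and of the flux positions is the delicate part; once it is in hand, the Chernoff estimate and the trial-function construction are routine.
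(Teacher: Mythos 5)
Your proposal for part (ii) is essentially the paper's argument: construct a Dirichlet trial state $\Psi f_k$ on $Q_{L_E}$ (with $L_E\sim E^{-1/2}$), show its Rayleigh quotient is $O(E)$ on a small-flux event, and lower-bound the probability of that event via Assumption \ref{assumption2} and independence. (Your stated flux threshold $\epsilon\sim E^{1/2}$ per unit cube is too large — with $\sim 1/E$ sub-cubes the total flux would diverge and $d_k^{\Phi(Q)}$ would blow up; the paper in effect takes the per-cube flux $\sim E^{3}$, cf.\ (\ref{lower05}). This only affects a $\log(1/E)$ correction in the exponent and does not change (\ref{main2}).)

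Part (i), however, rests on a spectral lemma that is false. You claim
\begin{displaymath}
\lambda_1(H_{\omega,N}^L)\ \geq\ \frac{c_0}{L^2}\,\#\bigl\{n\in \Z^2\cap Q_L : \Phi_\omega(Q_0+n)\geq \eta\bigr\},
\end{displaymath}
and justify it by applying Laptev--Weidl on each sub-cube carrying flux $\geq\eta$. But Laptev--Weidl only produces a scalar-potential term $\int V_m|u|^2$ on each sub-cube, and this can be made arbitrarily small by a trial function that avoids those sub-cubes. Concretely, suppose every flux point of $\Gamma_\omega\cap Q_L$ lies in the left half of $Q_L$, and every unit sub-cube in the left half carries flux $\geq\eta$. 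Then the right-hand side of your inequality is of order $c_0$, a constant. On the other hand, the right half of $Q_L$ is simply connected and curl-free, so $\a_\omega$ can be gauged away there; taking a trial function supported in the right half and vanishing on the dividing segment shows
$\lambda_1(H_{\omega,N}^L)\lesssim L^{-2}$, which contradicts the lemma for large $L$. The issue is exactly that the Neumann ground state can localize away from the region where the magnetic field lives.

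This is why the paper does not attempt such a pointwise lower bound. After passing from $H_{\omega,N}^k$ to $\tfrac12(-\Delta_N^k+V_\omega)$ via Laptev--Weidl plus the diamagnetic identity (Lemma \ref{laptev_weidl2}), it invokes analytic perturbation theory: with $E_1(\omega,t)$ the ground state energy of $\tfrac12(-\Delta_N^k+tV_\omega)$, the Feynman--Hellmann formula gives $E_1'(\omega,0)=\tfrac{1}{2|Q_k|}\int_{Q_k}V_\omega$, the spatial \emph{average} of $V_\omega$, and a Cauchy-integral bound controls the remainder uniformly for $|t|\lesssim k^{-2}$ (Lemma \ref{prmain5}). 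The implication extracted is: $E_1(\omega,1)\leq b/(2k+1)^2$ forces the average of $V_\omega$ to be $\leq s_0$, i.e.\ forces \emph{most} sub-cubes to have small $\beta_n$, and the Cram\'er estimate (\ref{prmain3}) makes that exponentially unlikely in $|Q_k|$. This is a statement about the average of the potential, not about each sub-cube seen by the ground state, and it is immune to the localization obstruction that kills your lemma. To repair part (i), replace the spectral lemma by this Temple/Feynman--Hellmann step; the Laptev--Weidl input you already have in place (to build $V_\omega$) is then exactly what the paper uses.
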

\noindent
Notice that the upper bound in (\ref{main2}) is a consequence
from (\ref{main1}). Thus (\ref{main2}) gives a lower bound 
of $N(E)$ in some weak sense.

The Lifshitz tail estimate
is first predicted by I.\ M.\ Lifshitz \cite{L1,L2},
and has been studied in connection with the
mathematical proof of the Anderson localization,
mainly for the Schr\"odinger operators
with random {\it scalar potentials}.
For the reference, see e.g.\ 
\cite{C-L, S1, Ki2, G1, S2, H}.

The mathematical proof of the Lifshitz tail and the Anderson localization
for the Schr\"odinger operators
with random {\it magnetic fields} is comparatively difficult,
mainly because of the following two reasons.
First, the eigenvalues of the operator restricted to
a finite box do \textit{not} depend monotonically 
on the random coupling constants. This fact makes the proof of
the Wegner estimate rather difficult.
Second, especially in the two-dimensional case,
the correlation 
between the values of the magnetic vector potential
at two different points is rather strong,
since 
any vector potential corresponding to a
single-site magnetic field falls off at infinity 
not faster than $O(|x|^{-1})$,
if the total magnetic flux is not zero.

The second difficulty can be solved if 
we assume the single-site magnetic
vector potential is compactly supported.
Under this assumption,
Klopp--Nakamura--Nakano--Nomura \cite{K-N-N-N}
prove the Anderson localization
in the discrete model.
In the continuum model,
Ghribi \cite{G2} proves the internal Lifshitz tail,
and Ghribi--Hislop--Klopp \cite{G-H-K} 
prove the Anderson localization.

In the case the dimension is two 
and the magnetic flux of the single-site 
magnetic field is non-zero,
there is only a few results.
Nakamura \cite{N1, N2} proves the Lifshitz tail 
at the bottom of the spectrum,
both in the discrete and in the continuum model.
Erd\"os--Hasler \cite{E-H1, E-H2, E-H3}
give remarkable results,
in which they succeed to prove 
the Anderson localization
both in the discrete and in the continuum model
(though the form of the magnetic field 
is rather special in the continuum model).
Ueki \cite{U1,U2}
studies the Gaussian random magnetic fields,
and obtain the Wegner estimate in \cite{U3} 
by using the idea of Erd\"os--Hasler.
Hasler--Luckett \cite{H-L} also prove
the Wegner estimate with optimal volume dependence
in the discrete model.

There are also some results for
the random $\delta$ magnetic fields.
Borg--Pul\'e \cite{B-P} prove 
the Lifshitz tail for a smooth approximation of 
a random $\delta$ magnetic field,
but not for the $\delta$ magnetic fields itself.
Borg \cite{B} gives a stochastic representation of
the Laplace transform of IDS
for the Schr\"odinger operator with 
$\delta$ magnetic fields,
in terms of the rotation number of the two-dimensional Brownian motion.
However, there seems to be no results for the Lifshitz tail 
for random $\delta$ magnetic fields at present.

The strategy for the proof of Theorem \ref{theorem_main}
is as follows.
In Nakamura's paper \cite{N1}, 
the crucial inequality in the proof of the Lifshitz tail
is the Avron-Herbst-Simon estimate:
\begin{equation}
 \label{ahs}
 H_\omega \geq \curl \a_\omega.
\end{equation}
If the magnetic field is regular,
we can reduce the problem to
the scalar potential case
by using (\ref{ahs}).
However, in our case the inequality (\ref{ahs})
is no longer useful, since $\curl \a_\omega=0$
almost everywhere.
Instead of (\ref{ahs}), 
we use {\it the Hardy-type inequality}
by Laptev--Weidl \cite{L-W}
(see also (\ref{lw}) below).
Then we can reduce 
the problem to the scalar potential case
as in \cite{N1}.

For the proof of the lower bound,
we follow the standard strategy
given in \cite[Theorem VI.2.7]{C-L}.
We give an estimate for the probability 
of the event `the lowest eigenvalue 
of the Dirichlet realization $H_{\omega, D}^k$ of the 
operator $H_\omega$ on $Q_k$ is less than $\epsilon$',
by constructing an approximation of the ground state
explicitly.
Here we use the estimates
obtained in the proof of Theorem \ref{theorem_spectrum}.

The rest of the paper is organized as follows.
In section 2, we shall prove Theorem \ref{theorem_spectrum},
and the lower bound in Theorem \ref{theorem_main}.
In section 3, we shall introduce the Hardy-type inequality 
by Laptev--Weidl \cite{L-W}, 
and give some key inequality in the proof of
Theorem \ref{theorem_main}.
In section 4, we shall prove Theorem \ref{theorem_main}.

%
%
%
%
\section{Spectrum}
In this section, we shall give a proof 
of Theorem \ref{theorem_spectrum},
using the method of approximating eigenfunctions.
First we review a lemma about the
gauge transform for $\delta$ magnetic fields.
\begin{lemma}
\label{gauge}
 Let $U$ be a simply connected open set in $\R^2$
 and $\Gamma$ be a discrete subset of $U$
without accumulation points in $U$.
 Let $\a, \tilde{\a}\in 
C^\infty(U\setminus \Gamma;\R^2)\cap L^1_{\rm loc}(U;\R^2)$.
Assume 
\begin{displaymath}
 \curl \a = \curl \tilde{\a} + 
\sum_{\gamma\in \Gamma}2\pi n_\gamma \delta_\gamma
\end{displaymath}
holds in ${\cal D}'(U)$,
where $n_\gamma\in \mathbb{Z}$
and $\delta_\gamma$ is the Dirac measure 
supported on the point $\gamma$.
Then, there exists $\Phi\in C^\infty(U\setminus \Gamma)$ such that
$|\Phi(z)|=1$ for any $z\in U\setminus \Gamma$ and
\begin{displaymath}
 \left(\frac{1}{i}\nabla - \a \right)(\Phi u)
= \Phi \left(\frac{1}{i}\nabla - \tilde{\a} \right) u
\end{displaymath}
for any $u\in C^\infty(U\setminus\Gamma)$.
\end{lemma}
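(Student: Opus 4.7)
The plan is to reduce the gauge equivalence to solving the first order PDE $\nabla\Phi=i(\a-\tilde{\a})\Phi$. Expanding the desired intertwining identity gives, after canceling the common $\Phi\,\frac{1}{i}\nabla u$ term, the pointwise condition $\frac{1}{i}(\nabla\Phi)u=(\a\Phi-\Phi\tilde{\a})u$ for every $u$, which is equivalent to $\nabla\Phi=i(\a-\tilde{\a})\Phi$ on $U\setminus\Gamma$. Writing $\Phi=e^{i\theta}$ with a real-valued $\theta\in C^\infty(U\setminus\Gamma)$ then reduces the problem to constructing $\theta$ satisfying $\nabla\theta=\mathbf{b}:=\a-\tilde{\a}$ on $U\setminus\Gamma$, with the single-valuedness of $\Phi$ amounting to $\theta$ being well defined modulo $2\pi\mathbb{Z}$.

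To construct $\theta$, I would fix a base point $z_0\in U\setminus\Gamma$ and define, for each $z\in U\setminus\Gamma$,
\begin{displaymath}
\theta(z)=\int_{\gamma_z}\mathbf{b}\cdot d\mathbf{r},
\end{displaymath}
where $\gamma_z$ is any piecewise smooth path in $U\setminus\Gamma$ from $z_0$ to $z$. Away from $\Gamma$ the vector field $\mathbf{b}$ is smooth and satisfies $\curl\mathbf{b}=0$ classically (the distributional hypothesis reads $\curl\mathbf{b}=\sum_\gamma 2\pi n_\gamma\delta_\gamma$, which vanishes on $U\setminus\Gamma$), so the integral depends only on the homotopy class of $\gamma_z$ in $U\setminus\Gamma$. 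Since $U$ is simply connected, any two such paths differ by a loop that, viewed in $U$, bounds a region enclosing finitely many points of $\Gamma$; the difference in $\theta$ is therefore the sum of the periods of $\mathbf{b}$ around small loops encircling those enclosed singularities, weighted by winding numbers.

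The computation of a single period is the key distributional step. For each $\gamma\in\Gamma$ choose $r>0$ small enough that $\overline{B_\gamma(r)}\subset U$ contains no other points of $\Gamma$, and apply the distributional identity $\curl\mathbf{b}=2\pi n_\gamma\delta_\gamma$ on $B_\gamma(r)$ against a test function that is identically $1$ on a slightly smaller disc; combined with the classical Stokes formula for the smooth field $\mathbf{b}$ on the intermediate annulus, this yields
\begin{displaymath}
\oint_{|z-\gamma|=r}\mathbf{b}\cdot d\mathbf{r}=2\pi n_\gamma\in 2\pi\mathbb{Z}.
\end{displaymath}
Consequently every period of $\mathbf{b}$ in $U\setminus\Gamma$ lies in $2\pi\mathbb{Z}$, so $\Phi:=e^{i\theta}$ is a well defined single-valued element of $C^\infty(U\setminus\Gamma)$ with $|\Phi|=1$, and by construction satisfies $\nabla\Phi=i(\a-\tilde{\a})\Phi$, which is precisely the required intertwining relation.

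The main technical obstacle is making the distributional Stokes step rigorous given that $\mathbf{b}$ is only assumed $L^1_{\rm loc}$ rather than continuous up to the circle $\{|z-\gamma|=r\}$: the boundary integral a priori requires a choice of ``good'' radius. I would handle this either by choosing $r$ in a set of full measure via Fubini so that the trace of $\mathbf{b}$ on $\{|z-\gamma|=r\}$ is integrable, or, more concretely, by exploiting the local representation implied by the hypothesis: in a small disc around each $\gamma$ one can write $\mathbf{b}=\nabla\theta_\gamma+n_\gamma\nabla\arg(z-\gamma)$ with $\theta_\gamma$ smooth, which immediately gives the period $2\pi n_\gamma$ and produces, locally, the explicit single-valued factor $\Phi=e^{i\theta_\gamma}\bigl((z-\gamma)/|z-\gamma|\bigr)^{n_\gamma}$ thanks to $n_\gamma\in\mathbb{Z}$. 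A standard patching using a partition of unity subordinate to these discs plus one disc-free component then assembles the global $\Phi$.
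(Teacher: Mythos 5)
Your argument is correct and is the standard period-lattice construction; the paper itself does not prove Lemma~\ref{gauge} but delegates to Geyler--\v{S}tov\'{\i}\v{c}ek \cite[Theorem 3.1]{G-S}, whose proof is of exactly this type. The reduction to $\nabla\Phi=i(\a-\tilde\a)\Phi$, the reduction to constructing $\theta$ with $\nabla\theta=\mathbf{b}:=\a-\tilde\a$ mod $2\pi$, and the computation that each small-circle period of $\mathbf{b}$ equals $2\pi n_\gamma$ by pairing the distributional identity against a radial cutoff are all sound; together with the fact that for $U$ simply connected, $H_1(U\setminus\Gamma)$ is generated by small loops about the points of $\Gamma$, this gives that all periods lie in $2\pi\Z$, so $\Phi=e^{i\theta}$ is single-valued and smooth.

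One remark: the ``main technical obstacle'' you raise in the last paragraph is not actually there. The vector field $\mathbf{b}=\a-\tilde\a$ is \emph{smooth} on $U\setminus\Gamma$ by hypothesis, and any circle $\{|z-\gamma|=r\}$ with $r$ small enough to exclude the rest of $\Gamma$ lies entirely in $U\setminus\Gamma$; hence $\mathbf{b}$ restricts smoothly to that circle and the classical Stokes/period argument applies with no need to select a ``good radius'' via Fubini. The $L^1_{\rm loc}$ hypothesis is only used to make the distributional curl on all of $U$ meaningful, i.e.\ so that the pairing $\langle\curl\mathbf{b},\chi\rangle=\int(b_1\partial_2\chi-b_2\partial_1\chi)\,dx$ is finite; once you pass to the annulus, smoothness takes over. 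Your alternative local representation $\mathbf{b}=\nabla\theta_\gamma+n_\gamma\nabla\arg(z-\gamma)$ is also fine, but it is not needed to salvage the argument, and it in fact presupposes part of what you are trying to prove; the direct period computation is cleaner.
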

For a proof, see \cite[Theorem 3.1]{G-S}.
By Lemma \ref{gauge},
we can arbitrarily choose an appropriate vector potential
for given $\delta$-magnetic fields.

\begin{proof}[Proof of Theorem \ref{theorem_spectrum}]
Since the operator $H_\omega$ is non-negative, 
we have $\sigma(H_\omega)\subset [0,\infty)$.
In order to prove $\sigma(H_\omega)\supset [0,\infty)$,
we shall prove $\xi^2\in \sigma(H_\omega)$  almost surely
for any $\xi\in \mathbb{Q}$.
Then it suffices to show that we can 
almost surely
find a function $u$ satisfying
\begin{equation}
\label{2-1}
 u\in D(H_\omega),\quad 
\|u\|=1,\quad
\|(H_\omega - \xi^2)u\|<\epsilon
\end{equation}
for any positive rational number $\epsilon$,
where the norm without suffix denotes the $L^2$-norm.

Take sufficiently large positive integers $k$ and $l$,
which will be determined later.
By (vi) of Assumptions \ref{assumptions},
we can almost surely find 
$n\in \mathbb{Z}^2$
such that the square
$Q=Q_k + n$ satisfies
\begin{eqnarray}
\label{2-2}
&& \Phi_\omega(m + Q_0)<\frac{1}{l}\quad
\mbox{for any }m \in Q \cap\mathbb{Z}^2,\\
&&B_\gamma(c\sqrt{\alpha_\gamma})\cap
B_{\gamma'}(c\sqrt{\alpha_{\gamma'}})=\emptyset,\quad
B_\gamma(c\sqrt{\alpha_\gamma})\cap \partial (m+Q_0) =\emptyset\nonumber\\
&&
\quad \mbox{for any }
m\in Q \cap\mathbb{Z}^2,\ 
\gamma,\gamma'\in (m+Q_0)\cap \Gamma_\omega
\mbox{ with }\gamma\not=\gamma',
\label{2-3}
\end{eqnarray}
or
\begin{equation}
\label{2-4}
 \Phi_\omega(m + Q_0)\leq
\sum_{\gamma\in (m+Q_0)\cap \Gamma_\omega}\sqrt{\alpha_\gamma} 
<\frac{1}{l}\quad
\mbox{for any }m\in Q \cap\mathbb{Z}^2.
\end{equation}
We omit the random parameter $\omega$ in the rest of the proof,
because we do not use the probabilistic argument hereafter.

We shall construct the function $u$
supported in the square $Q$.
By Lemma \ref{gauge} and (\ref{1-2}), we may assume
\begin{displaymath}
 \a (z)  =(\Im \psi(z), \Re \psi(z)),\quad
\psi(z) = \sum_{\gamma \in \Gamma \cap Q}
 \frac{\alpha_\gamma}{z-\gamma}
\end{displaymath}
in $Q$, where we again identify 
$x=(x_1,x_2)$ with $z=x_1+ix_2$
and regard $\gamma$ as a complex number.
Put
\begin{displaymath}
  \Psi(z) = \prod_{\gamma \in \Gamma \cap Q}
 |z - \gamma|^{\alpha_\gamma}.
\end{displaymath}
If $\Gamma\cap Q=\emptyset$, we put 
$\psi(z)=0$ and $\Psi(z)=1$.
Then we have
\begin{equation}
 2 \partial_{\bar{z}} \Psi(z) = \overline{\psi(z)} \Psi(z),\quad
 - 2 \partial_{z} \Psi(z)^{-1} 
= \psi(z) \Psi(z)^{-1},
\label{2-4.1}
\end{equation}
where $\partial_z = (\partial_1 - i\partial_2)/2$,
$\partial_{\bar{z}} = (\partial_1 + i\partial_2)/2$.
Thus we have by (\ref{2-4.1})
\begin{eqnarray}
 {\cal L} 
&=& \left(\frac{1}{i}\nabla - \a\right)^2\nonumber\\
&=&    \left(2\partial_{\bar{z}} + \overline{\psi}\right)
\left(-2\partial_{z} + \psi\right)\nonumber\\
&=& \Psi^{-1} (2\partial_{\bar{z}}) 
   \Psi^2 (-2\partial_{z})\Psi^{-1},
\label{2-5}
\end{eqnarray}
as an operator acting on the functions on $Q\setminus \Gamma$.

Let $\chi_k\in C_0^\infty(Q)$ 
satisfying the following conditions:
\begin{eqnarray*}
 &&0 \leq \chi_k(z) \leq 1,\quad
 \chi_k (z) = 
\begin{cases}
 1 & (z\in n + Q_{k-1}), \cr
 0 & (z\in n 
+ (Q_k \setminus Q_{k-1/2})),
\end{cases}
\\ 
 &&\|\nabla\chi_k\|_\infty + \|\Delta \chi_k\|_\infty \leq C_0,
\end{eqnarray*}
where $C_0$ is a constant independent of $k,n$.
Put
\begin{displaymath}
 v_k = \chi_k \Psi e^{ix_1 \xi}, \quad
u_k=\frac{v_k}{\|v_k\|}.
\end{displaymath}
We are going to show $u=u_k$ satisfies (\ref{2-1}),
if we take $k$ and $l$ sufficiently large.

Let us estimate $\|v_k\|$ from below.
Take $m \in (n+Q_{k-1}) \cap \mathbb{Z}^2$,
and put
\begin{eqnarray*}
 \Gamma_1 &=& \Gamma\cap (m + Q_1),\\
 \Gamma_2 &=& \Gamma\cap (Q\setminus (m + Q_1)).
\end{eqnarray*}
Let 
$z\in m+Q_0$.
Using the inequality
$e^t \geq 1+t$ $(t\in \mathbb{R})$,
we have
\begin{eqnarray}
 \prod_{\gamma\in \Gamma_1}|z-\gamma|^{2\alpha_\gamma}
&=& \exp
\left(\sum_{\gamma\in \Gamma_1}2\alpha_\gamma \log |z-\gamma|\right)
\nonumber\\
&\geq&
1 + \sum_{\gamma\in \Gamma_1}2\alpha_\gamma \log |z-\gamma|,
\label{2-6}\\
 \prod_{\gamma\in \Gamma_2}|z-\gamma|^{2\alpha_\gamma}
&\geq & 1.
\label{2-7}
\end{eqnarray}
Notice that
\begin{equation}
 \int_S \log|z| dx\geq \int_{|z|\leq 1}\log |z|dx
=2\pi \int_0^1 r\log r dr = -\frac{\pi}{2}
\label{2-8}
\end{equation}
for any bounded Borel set $S$.
By 
(\ref{2-6}), (\ref{2-7}), (\ref{2-8}),
and (\ref{2-2}) or (\ref{2-4}),
we have
\begin{eqnarray}
 \int_{m+Q_0}
|v_k|^2 dx\nonumber
&\geq&
1+
\sum_{\gamma \in \Gamma_1}
\int_{m+Q_0}
2\alpha_\gamma \log |z-\gamma|dx\nonumber\\
&\geq&
1
-\pi
\sum_{\gamma \in \Gamma_1}\alpha_\gamma\nonumber\\
&\geq& 1 - \frac{9\pi}{l}
\label{2-9}
\end{eqnarray}
for any $m \in n + Q_{k-1}$.
Take $l_0$ so large that $1-9\pi/l_0>0$.
Then, summing up the both sides of (\ref{2-9})
with respect to $m$,
we find a positive constant $C_1$
independent of $k,l$ such that
\begin{equation}
 \label{2-10}
 \|v_k\| \geq C_1(2k+1)
\end{equation}
for any $l \geq l_0$.

Next we shall estimate $\|(H - \xi^2)v_k\|$ from above.
Since
\begin{displaymath}
 2 \partial_z e^{ix_1\xi} = 2 \partial_{\bar{z}} e^{ix_1\xi} = 
\partial_1 e^{ix_1\xi}=i\xi e^{ix_1\xi},
\end{displaymath}
we have by (\ref{2-4.1}) and (\ref{2-5})
\begin{eqnarray}
 {\cal L} v_k
&=&
\Psi^{-1} (2\partial_{\bar{z}}) 
   \Psi^2 (-2\partial_{z})
( \chi_k e^{ix_1\xi})\nonumber\\
&=&
\Psi^{-1} (2\partial_{\bar{z}}) 
   \Psi^2 \left(-2\partial_{z} \chi_k
                   - i\xi \chi_k 
                   \right)e^{ix_1\xi}\nonumber\\
&=& -2\Psi \overline{\psi}
    \left( 2\partial_{z}\chi_k + i\xi \chi_k \right) e^{ix_1\xi}\nonumber\\
&& - \Psi\left(
 \Delta \chi_k
 + 2i\xi \partial_1 \chi_k
\right)e^{ix_1\xi} + \xi^2 v_k.
\label{2-11}
\end{eqnarray}
(\ref{2-11}) implies the singularity of ${\cal L}v_k$ near
$\gamma\in Q\cap \Gamma$ is at most
$O(|z-\gamma|^{-1+\alpha_\gamma})$,
so ${\cal L}v_k\in L^2(\mathbb{R}^2)$
and $v_k \in D(H)$ by (\ref{bdry}).
By (\ref{2-11}), we have
\begin{eqnarray}
\label{2-12}
 \|(H - \xi^2)v_k\|
\leq  C_2 \left(
   \|\Psi \psi\|_{L^2(Q)}
+  \|\Psi\|_{L^2(n+ (Q_k \setminus Q_{k-1}))}
\right),
\end{eqnarray}
where $C_2 =2(C_0+1)(|\xi|+1)$.

In the sequel, we denote
 $d_k =\diam Q = \sqrt{2}(2k+1)$.
Then we have
\begin{equation}
 \|\Psi\|_{L^2(n+(Q_k \setminus Q_{k-1}))}
\leq \|\Psi\|_\infty 
\left|Q_k \setminus Q_{k-1}\right|^{\frac{1}{2}}
\leq d_k^{\Phi(Q)}\sqrt{8k}.
\label{2-21}
\end{equation}
We shall estimate $\|\Psi \psi\|_{L^2(Q)}$
under the conditions (\ref{2-2}) and (\ref{2-3}),
and under the condition (\ref{2-4}), separately.
In both cases, we have
\begin{equation}
\label{2-12.5}
 \Phi(Q)<\frac{(2k+1)^2}{l},
\end{equation}
and
\begin{equation}
 |\Psi\psi(z)|
\leq \sum_{\gamma\in \Gamma\cap Q}
\alpha_\gamma |z-\gamma|^{\alpha_\gamma-1}
\prod_{\gamma'\not=\gamma}
|z-\gamma'|^{\alpha_{\gamma'}}
\label{2-13}
\end{equation}
for $z\in Q$,
where $\prod_{\gamma'\not=\gamma}$ stands for
$\prod_{\gamma'\in \Gamma\cap Q,\,\gamma'\not=\gamma}$
as in the sequel.

Assume (\ref{2-2}) and (\ref{2-3}) hold.
By (\ref{2-13}), we have
\begin{eqnarray}
&&
\|\Psi \psi\|_{L^2(\bigcup_\gamma B_{\gamma}(c\sqrt{\alpha_\gamma}))}^2\nonumber\\
 &\leq&
\sum_{\gamma\in \Gamma\cap Q}
2\alpha_\gamma^2
\int_{B_{\gamma}(c\sqrt{\alpha_\gamma})}
|z-\gamma|^{2\alpha_\gamma-2}dx\cdot
d_k^{2\Phi(Q)}\nonumber\\
&&+
\sum_{\gamma\in \Gamma\cap Q}
2
\int_{B_{\gamma}(c\sqrt{\alpha_\gamma})}
\left(
\sum_{\mu\not=\gamma}\alpha_{\mu}|z-\mu|^{\alpha_{\mu}-1}\right)^2dx
\cdot d_k^{2\Phi(Q)}.
\label{2-14}
\end{eqnarray}
The first sum in (\ref{2-14}) is bounded by
\begin{eqnarray}
&& 
\sum_{\gamma\in \Gamma\cap Q}
2\alpha_\gamma^2\cdot 2\pi\int_0^{c\sqrt{\alpha_\gamma}}
r^{2\alpha_\gamma-2}\cdot rdr
\cdot d_k^{2\Phi(Q)}\nonumber\\
&=& 
\sum_{\gamma\in \Gamma\cap Q}
2\pi \alpha_\gamma (c\sqrt{\alpha_\gamma})^{2\alpha_\gamma}\cdot
d_k^{2\Phi(Q)}\nonumber\\
&\leq&
2\pi \Phi(Q) d_k^{2\Phi(Q)}
\label{2-15}
\end{eqnarray}
where we used $0<c\leq 1$ and $0\leq \alpha_\gamma<1$
in the last inequality.

For the second sum in (\ref{2-14}),
we use the inequality
\begin{displaymath}
\frac{1}{2} \leq\frac{|z-w|}{|z-\mu|}\leq \frac{3}{2}
\end{displaymath}
for  $z\not\in B_\mu(c\sqrt{\alpha_\mu})$
and  $w \in B_\mu(c\sqrt{\alpha_\mu}/2)$.
Then we see that
there exists a positive constant $C_3$ 
independent of $\mu$ and $\alpha_\mu$ with $0\leq \alpha_\mu <1$ such that
\begin{equation}
\label{2-16}
 \alpha_\mu|z-\mu|^{\alpha_\mu-1}
\leq C_3\int_{B_\mu(c\sqrt{\alpha_\mu}/2)}
 |z-w|^{\alpha_\mu-1}du
\end{equation}
for $z\not\in B_\mu(c\sqrt{\alpha_\mu})$,
where we write $w=u_1+i u_2$ and $du=du_1du_2$.
Summing up (\ref{2-16}) with respect to $\mu\not=\gamma$,
we have
\begin{eqnarray}
\sum_{\mu\not=\gamma}
\alpha_\mu|z-\mu|^{\alpha_\mu-1}
\leq 
C_3
\int_{\bigcup_{\mu\not=\gamma}B_\mu(c\sqrt{\alpha_\mu}/2)}
(\min(|z-w|,1)^{-1}du
\label{2-17}
\end{eqnarray}
for $z\in B_\gamma(c\sqrt{\alpha_\gamma})$,
where we used the disjointness assumption (\ref{2-3}).
The area of the domain of integration in (\ref{2-17})
is bounded by
\begin{displaymath}
\frac{\pi c^2}{4}\sum_{\mu\not=\gamma}\alpha_\mu\leq\frac{\pi c^2}{4}\Phi(Q),
\end{displaymath}
and the right-hand side equals the area 
of the disc of radius $c\sqrt{\Phi(Q)}/2$.
Since the integrand of (\ref{2-17}) is monotone non-increasing
with respect to $|z-w|$,
we have by (\ref{2-17})
\begin{eqnarray}
\sum_{\mu\not=\gamma}
\alpha_\mu|z-\mu|^{\alpha_\mu-1}
&\leq& C_3
 \int_{|z-w|\leq c\sqrt{\Phi(Q)}/2}|z-w|^{-1}du\nonumber\\
&=& \pi c C_3\sqrt{\Phi(Q)},
\label{2-17.1}
\end{eqnarray}
provided that $l$ is sufficiently large so that 
\begin{equation}
\label{2-17.2}
\frac{c\sqrt{\Phi(Q)}}{2}\leq \frac{c(2k+1)}{2\sqrt{l}}\leq 1,
\end{equation}
where we used (\ref{2-12.5}).
By (\ref{2-17.1}),
the second sum in (\ref{2-14}) 
is bounded by
\begin{equation}
\label{2-18}
\sum_{\gamma\in \Gamma\cap Q} 
2 \cdot \pi c^2 \alpha_\gamma 
\cdot (\pi c C_3)^2\Phi(Q) \cdot
d_k^{2\Phi(Q)}
=
C_4 \Phi(Q)^2 d_k^{2\Phi(Q)}
\end{equation}
for sufficiently large $l$ satisfying (\ref{2-17.2}),
where $C_4=2\pi^3 c^4 C_3^2$.

For $z \in Q\setminus \bigcup_{\gamma} B_\gamma(c \sqrt{\alpha_\gamma})$,
we have estimate
(\ref{2-16})
for every $\mu \in Q \cap \Gamma$.
Repeating the above argument, we see that
\begin{displaymath}
 |\Psi\psi(z)|\leq \pi c C_3\sqrt{\Phi(Q)}\cdot d_k^{\Phi(Q)}
\end{displaymath}
for sufficiently large $l$ satisfying (\ref{2-17.2}), and 
\begin{equation}
\label{2-19}
 \|\Psi\psi\|^2_{L^2(Q\setminus \bigcup_{\gamma} B_\gamma(c \sqrt{\alpha_\gamma}))}
\leq (\pi c C_3)^2\Phi(Q) d_k^{2\Phi(Q)} \cdot (2k+1)^2,
\end{equation}
since the area of $Q$ is $(2k+1)^2$.
By (\ref{2-12.5}), (\ref{2-14}), (\ref{2-15}), 
(\ref{2-18}) and (\ref{2-19}),
there exists a positive constant $C_5$ independent of $k$ and $l$
such that
\begin{equation}
\label{2-20}
 \|\Psi\psi\|_{L^2(Q)} \leq 
C_5\frac{(2k+1)^2}{\sqrt{l}}d_k^{\Phi(Q)},
\end{equation}
for sufficiently large $l$ satisfying (\ref{2-17.2}).

By (\ref{2-10}),
(\ref{2-12}), (\ref{2-21}) and (\ref{2-20}),
we have
\begin{equation}
\label{2-22}
 \|(H -\xi^2)u_k\| \leq
C_1^{-1} C_2 
d_k^{(2k+1)^2/l}
\left(
\frac{C_5(2k+1)}{\sqrt{l}}
+ 
\frac{\sqrt{8k}}{2k+1}
\right)
\end{equation}
for sufficiently large $l$ satisfying $l \geq l_0$
and (\ref{2-17.2}).
For $\epsilon>0$,
take $k$ so large that
$C_1^{-1}C_2\sqrt{8k}/(2k+1) < \epsilon/2$,
and then take $l$ so large that
the right hand side of $(\ref{2-22})$ is less than $\epsilon$.
Then $u_k$ satisfies (\ref{2-1}).

Next we assume (\ref{2-4}) holds. 
By the Minkowski inequality, we have
\begin{eqnarray}
 && \|\Psi \psi\|_{L^2(Q)}\nonumber\\
 &\leq&
 \sum_{\gamma \in Q\cap \Gamma}
\alpha_\gamma
 \left(
 \int_Q  \left| z-\gamma \right|^{2\alpha_\gamma-2}
 \prod_{\gamma'\neq \gamma} \left| z-\gamma'\right|^{2\alpha_{\gamma'}}
dx 
\right)^{\frac{1}{2}}\nonumber\\
&\leq&
 \sum_{\gamma \in Q\cap \Gamma}
\alpha_\gamma
d_k^{\Phi(Q)-\alpha_\gamma}
 \left(
2\pi \int_0^{d_k}r^{2\alpha_\gamma-2}\cdot rdr
\right)^{\frac{1}{2}}\nonumber\\
&=&
 \sum_{\gamma \in Q\cap \Gamma}
\sqrt{\alpha_\gamma \pi}
d_k^{\Phi(Q)}\nonumber\\
&\leq&
\frac{\sqrt{\pi}}{l}
(2k+1)^2
d_k^{\Phi(Q)}.\label{2-23}
\end{eqnarray}
By (\ref{2-10}), 
(\ref{2-12}), (\ref{2-21}), (\ref{2-12.5}) and (\ref{2-23}),
we have
\begin{displaymath}
 \|(H -\xi^2)u_k\| \leq
C_1^{-1} C_2 
d_k^{(2k+1)^2/l}
\left(
\frac{\sqrt{\pi}(2k+1)}{l}
+ 
\frac{\sqrt{8k}}{2k+1}
\right)
\end{displaymath}
for sufficiently large $l$,
and obtain the conclusion similarly.
\end{proof}

Using the estimates obtained in the above proof,
we can also prove the lower bound in (ii) of Theorem \ref{theorem_main}.
We use the same notation introduced above.
\begin{proof}[Proof of the lower bound in (ii) of Theorem \ref{theorem_main}]
Let $H_{\omega,D}^k$ be the operator $H_\omega$ restricted on
$L^2(Q_k)$ with the Dirichlet boundary conditions,
and put $N_{\omega, D}^k(E)$ be the number of the eigenvalues
of $H_{\omega,D}^k$ less than or equal to $E$, counted with multiplicity.
It is known that
\begin{displaymath}
 N(E)=\sup_{k \geq 1}\frac{1}{|Q_k|}N_{\omega,D}^k(E)
\end{displaymath}
holds almost surely (see \cite[VI.1.3]{C-L}).
Let $E_1(H)$ denotes the smallest eigenvalue
of a self-adjoint operator $H$.
Then we have
\begin{eqnarray}
 N(E)
&\geq& \frac{1}{|Q_k|}\mathbf{E}\left[N_{\omega,D}^k(E)\right]
\nonumber\\
&\geq& \frac{1}{|Q_k|}\mathbf{P}\left[E_1(H_{\omega,D}^k)\leq E\right]
\label{lower01}
\end{eqnarray}
for any $k \geq 1$.

We assume (\ref{assumption2_1}) holds for sufficiently small $\epsilon>0$,
and estimate the right hand side of (\ref{lower01}) from below.
Let $\epsilon$ be a small positive number.
Take positive integers $k$ and $l$ so that
\begin{equation}
 \label{lower05}
\frac{\epsilon}{8}< \left(\frac{\pi}{2k+1}\right)^2<\frac{\epsilon}{4},
\quad
\frac{\epsilon^3}{2}< \frac{1}{l} < \epsilon^3.
\end{equation}
Then (\ref{2-17.2}) is satisfied for sufficiently small $\epsilon$.

Let us suppose
(\ref{2-2}) and (\ref{2-3}) hold with $Q=Q_k$,
and construct an approximation of the ground state of 
$H_{\omega,D}^k$. Let $-\Delta_D^k$ be the Dirichlet Laplacian
on $Q_k$. The ground state of $-\Delta_D^k$ is
\begin{displaymath}
 f_k(x) 
= 
\cos\left(\frac{\pi}{2k+1}x_1\right)
\cos\left(\frac{\pi}{2k+1}x_2\right),
\end{displaymath}
with the ground state energy $2 \cdot \left(\pi/(2k+1)\right)^2$.
Put
\begin{displaymath}
 w_k = \Psi f_k,\quad u_k = \frac{w_k}{\|w_k\|}.
\end{displaymath}
Then, similar to (\ref{2-10}), we can prove 
\begin{equation}
 \label{lower02}
 \|w_k\| \geq C_1'(2k+1)
\end{equation}
for any $l \geq l_0$ and $k \geq 1$,
 by using the fact $|f_k|\geq 1/4$ on $Q_{k/2}$.
By (\ref{2-4.1}) and (\ref{2-5}), we have
\begin{displaymath}
 H_{\omega,D}^k w_k 
=
2\left(\frac{\pi}{2k+1}\right)^2w_k
+
2 \Psi \overline{\psi}\cdot(-2\partial_z f_k).
\end{displaymath}
Since $|2\partial_z f_k|\leq \pi/(2k+1)$,
we have by (\ref{lower02})
\begin{eqnarray}
 (u_k,H_{\omega,D}^k u_k)
&\leq& 
2\left(\frac{\pi}{2k+1}\right)^2
+ \frac{2\pi {C_1'}^{-1}}{(2k+1)^2} 
\|\Psi \psi\|,
\label{lower04}
\end{eqnarray}
where $(u,v)=\int \bar{u}v dx$ 
denotes the $L^2$-inner product.
Then (\ref{lower04}), (\ref{2-12.5}), (\ref{2-20}) and (\ref{lower05}) imply
\begin{displaymath}
 d_k =\sqrt{2}(2k+1)<4\pi\epsilon^{-1/2},\quad
\Phi(Q)<\frac{(2k+1)^2}{l}<8\pi^2\epsilon^2,
\end{displaymath}
\begin{eqnarray*}
 (u_k,H_{\omega,D}^k u_k)
&\leq& 
2\left(\frac{\pi}{2k+1}\right)^2
+ \frac{2\pi {C_1'}^{-1}C_5}{\sqrt{l}} 
d_k^{\Phi(Q)}\\
&\leq &
\epsilon
\left(
\frac{1}{2}+ 2\pi {C_1'}^{-1}C_5\epsilon^{1/2}
\left(4\pi\epsilon^{-1/2}\right)^{8\pi^2\epsilon^2}
\right).
\end{eqnarray*}
Since the expression in the big parenthesis tends to $1/2$
as $\epsilon$ tends to $0$, we have by the min-max principle
\begin{equation}
 \label{lower06}
E_1(H_{\omega,D}^k)<\epsilon
\end{equation}
for sufficiently small $\epsilon$.
Thus the events
(\ref{2-2}) and (\ref{2-3}) with $k$ and $l$ satisfying
(\ref{lower05}) imply the inequality (\ref{lower06}).
Then the independentness assumption implies
\begin{eqnarray}
 \mathbf{P}\{E_1(H_{\omega,D}^k \leq \epsilon)\}
&\geq & 
\mathbf{P}\{\mbox{(\ref{2-2}) and (\ref{2-3}) holds}\}\nonumber\\
&\geq& (c_1 l^{-\delta_1})^{|Q_k|}\nonumber\\
&\geq& (c_1 (\epsilon^3/2)^{\delta_1})^{8\pi^2 \epsilon^{-1}}
\label{lower07}
\end{eqnarray}
for sufficiently small $\epsilon>0$.
Then (\ref{lower01}), (\ref{lower05}) and (\ref{lower07}) imply
\begin{displaymath}
\liminf_{E\to +0}  
\frac{\log | \log N(E)|}{\log E}
\geq -1,
\end{displaymath}
which is the desired conclusion.
We can give the same conclusion in the case 
(\ref{assumption2_2}) holds for sufficiently small $\epsilon>0$,
by using (\ref{2-23}) instead of (\ref{2-20}).
\end{proof}

%
%
%
%
\section{Hardy-type inequality}
For $d\geq 3$,
{\it the Hardy inequality} says
\begin{equation}
\label{hardy}
 \int_{\mathbb{R}^d}|\nabla u|^2 dx
 \geq \left(\frac{d-2}{2}\right)^2 
\int_{\mathbb{R}^d} \frac{|u|^2}{|x|^2} dx
\end{equation}
for any $u\in H^1(\mathbb{R}^d)$.
The inequality (\ref{hardy}) fails when $d=2$,
however, Laptev--Weidl \cite{L-W}
prove that a similar inequality holds
if there exists a $\delta$ magnetic field
at the origin.

\begin{lemma}[Laptev-Weidl]
\label{laptev_weidl} 
Let $\alpha\in\mathbb{R}$ and put 
$\displaystyle 
\a_\alpha= \left(\Im \frac{\alpha}{z}, \Re \frac{\alpha}{z}\right)$,
where $z=x_1+ix_2$
($a_\alpha$ satisfies $\curl \a_\alpha = 2\pi \alpha\delta$).
Then, we have
\begin{equation}
 \label{lw}
 \int_{|x|\leq R} \left| \left(
 \frac{1}{i}\nabla - \a_\alpha\right) u\right|^2 dx
 \geq
 \rho(\alpha)
 \int_{|x|\leq R}
 \frac{|u|^2}{|x|^2} dx
\end{equation}
for any $R>0$ and any $u\in C_0^\infty(\mathbb{R}^2\setminus \{0\})$,
where $\rho(\alpha)=\min_{n\in \mathbb{Z}}|n-\alpha|^2$.
\end{lemma}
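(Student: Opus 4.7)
My plan is to prove the inequality by passing to polar coordinates and performing a Fourier decomposition of $u$ in the angular variable, so that the magnetic covariant derivative becomes a shift by $\alpha$ of the angular momentum; the essential point is that after this shift the squared angular operator has spectrum $\{|n-\alpha|^2 : n\in\Z\}$, whose infimum is exactly $\rho(\alpha)$.

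First I would compute $\a_\alpha$ in polar coordinates $x=(r\cos\theta,r\sin\theta)$. A direct calculation from $\alpha/z = (\alpha/r)(\cos\theta - i\sin\theta)$ gives $\a_\alpha = (\alpha/r)(-\sin\theta,\cos\theta)$, i.e.\ $\a_\alpha$ is purely in the angular direction with coefficient $\alpha/r$. Consequently the covariant gradient decomposes orthogonally into its radial and angular parts, so that
\begin{equation*}
\left|\left(\frac{1}{i}\nabla - \a_\alpha\right)u\right|^2
= |\partial_r u|^2 + \frac{1}{r^2}\left|\left(\frac{1}{i}\partial_\theta - \alpha\right)u\right|^2.
\end{equation*}
This reduces the problem to an inequality for the angular operator $\frac{1}{i}\partial_\theta - \alpha$ on $L^2(S^1)$.

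Next, for each fixed $r\in(0,R]$ I would expand $u(r,\cdot)\in C^\infty(S^1)$ in a Fourier series $u(r,\theta)=\sum_{n\in\Z} u_n(r) e^{in\theta}$. Parseval's identity yields
\begin{equation*}
\int_0^{2\pi}\left|\left(\tfrac{1}{i}\partial_\theta - \alpha\right)u\right|^2 d\theta
= 2\pi\sum_{n\in\Z}|n-\alpha|^2 |u_n(r)|^2
\geq \rho(\alpha)\cdot 2\pi\sum_{n\in\Z}|u_n(r)|^2
= \rho(\alpha)\int_0^{2\pi}|u|^2\, d\theta,
\end{equation*}
using the definition $\rho(\alpha)=\min_{n\in\Z}|n-\alpha|^2$. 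Dropping the non-negative radial term $|\partial_r u|^2$ and integrating in $r$ with the polar Jacobian against $dx = r\, dr\, d\theta$ on $\{|x|\le R\}$ then gives \eqref{lw}.

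There is essentially no hard step; the only thing requiring care is the legitimacy of the termwise manipulation. Since $u\in C_0^\infty(\R^2\setminus\{0\})$, its restriction to the annulus $\{\varepsilon\le|x|\le R\}$ for small $\varepsilon>0$ is smooth and compactly supported away from the origin, so the Fourier series in $\theta$ converges uniformly and differentiates termwise in $\theta$; the pointwise-in-$r$ Parseval computation is therefore rigorous, and the integral over the disk picks up no contribution from a neighbourhood of $0$. No cancellation between radial and angular terms is needed: the radial derivative is simply discarded, which is why the constant $\rho(\alpha)$ is sharp precisely on functions of the form $r^{|n-\alpha|}e^{in\theta}$ realising the minimum.
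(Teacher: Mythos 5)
Your proof is correct and is exactly the standard argument of Laptev--Weidl, which the paper simply cites rather than reproduces: pass to polar coordinates so that $\a_\alpha=(\alpha/r)\hat\theta$, split the covariant gradient orthogonally into radial and angular parts, expand in a Fourier series in $\theta$, and bound the angular quadratic form below by $\rho(\alpha)=\min_{n\in\Z}|n-\alpha|^2$ via Parseval; dropping the non-negative radial term and integrating with the Jacobian $r\,dr\,d\theta$ gives \eqref{lw}. All steps are rigorous for $u\in C_0^\infty(\R^2\setminus\{0\})$, as you note, since the Fourier series converges and differentiates termwise on compact subsets of the punctured plane. The only small inaccuracy is your closing aside about sharpness: $\rho(\alpha)$ is the optimal constant, but it is attained only in the limit along a minimizing sequence (e.g.\ suitable cutoffs of $r^{|n_0-\alpha|}e^{in_0\theta}$ with $n_0$ realising the minimum), not by a genuine extremal in $C_0^\infty(\R^2\setminus\{0\})$; this does not affect the validity of the inequality.
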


Let us return to our model.
We define 
a random scalar potential $V_\omega(z)$ as follows.
For $m \in \mathbb{Z}^2$,
put $\Gamma_{\omega,m}=(m+Q_0)\cap \Gamma_\omega$
and
\begin{displaymath}
 \delta_m =
\min_{\gamma\in \Gamma_{\omega,m}}
\left(
\dist(\gamma, ( m+\partial Q_0)\cup (\Gamma_{\omega,m}\setminus \{\gamma\})
\right)/2.
\end{displaymath}
For $z \in m + Q_0$, put
\begin{displaymath}
 V_\omega(z) = 
\begin{cases}
\displaystyle
\min\left(
\frac{1}{\delta_m^2}
 \rho(\alpha_\gamma(\omega)),
1
\right)&
 \mbox{(if $z \in B_{\gamma}(\delta_m)$ for some 
$\gamma\in \Gamma_{\omega,m}$)},\cr
 0 & (\mbox{otherwise}).
\end{cases}
\end{displaymath}
 \begin{lemma}
\label{laptev_weidl2} 
Let $H_{\omega,N}^k$ be the Neumann realization 
of ${\cal L}_\omega$ on $Q_k$,
and $\Delta_N^k$ the Neumann Laplacian on $Q_k$.
Then,
\begin{displaymath}
E_1(H_{\omega,N}^k) \geq  E_1\left(
\frac{1}{2}\left(-\Delta_N^k + V_\omega\right)
\right).
\end{displaymath}
\end{lemma}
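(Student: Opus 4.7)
My plan is to establish, for every $u$ in a form core of $H_{\omega,N}^k$, the form inequality
$$\int_{Q_k}\Big|\Big(\tfrac{1}{i}\nabla-\a_\omega\Big)u\Big|^2 dx \geq \tfrac{1}{2}\int_{Q_k}|\nabla|u||^2\, dx + \tfrac{1}{2}\int_{Q_k}V_\omega|u|^2\, dx.$$
Since $\||u|\|=\|u\|$, the min--max principle then gives the claimed ground-state bound. A convenient core consists of $u\in C^\infty(\overline{Q_k})$ vanishing in a neighborhood of each $\gamma\in\Gamma_\omega\cap\overline{Q_k}$, with Neumann being the natural boundary behavior on $\partial Q_k$.

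I split the magnetic quadratic form into two equal halves. The first half is bounded by the diamagnetic inequality $|(\tfrac{1}{i}\nabla-\a_\omega)u|\geq|\nabla|u||$ a.e., which yields the first summand on the right. For the second half I localize: by the very choice of $\delta_m$, the open balls $B_\gamma(\delta_m)$, indexed over $m\in Q_k\cap\Z^2$ and $\gamma\in\Gamma_{\omega,m}$, are pairwise disjoint and each lies strictly inside its home cell $m+Q_0$. Since $V_\omega\equiv 0$ off the union of these balls and, on each ball, $|z-\gamma|\leq\delta_m$ together with the cut-off at $1$ gives $V_\omega\leq\rho(\alpha_\gamma)/|z-\gamma|^2$, it is enough to prove
$$\int_{B_\gamma(\delta_m)}\Big|\Big(\tfrac{1}{i}\nabla-\a_\omega\Big)u\Big|^2 dx \geq \rho(\alpha_\gamma)\int_{B_\gamma(\delta_m)}\frac{|u|^2}{|z-\gamma|^2}\, dx$$
for every such ball, and then sum.

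The displayed local estimate is obtained from Lemma~\ref{laptev_weidl} by a local gauge reduction. Inside $B_\gamma(\delta_m)$ the vector potential $\a_\omega$ has the same $\alpha_\gamma/(z-\gamma)$ principal part as $\a_{\alpha_\gamma,\gamma}(z):=(\Im(\alpha_\gamma/(z-\gamma)),\Re(\alpha_\gamma/(z-\gamma)))$, so $\a_\omega-\a_{\alpha_\gamma,\gamma}$ extends to a smooth, curl-free vector field on the simply connected disc and is therefore $\nabla\phi$ for some $\phi\in C^\infty(B_\gamma(\delta_m))$. By Lemma~\ref{gauge}, $(\tfrac{1}{i}\nabla-\a_\omega)u=e^{i\phi}(\tfrac{1}{i}\nabla-\a_{\alpha_\gamma,\gamma})v$ with $v:=e^{-i\phi}u$, so the moduli coincide and $|v|=|u|$. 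Lemma~\ref{laptev_weidl}, translated to center $\gamma$ and with $R=\delta_m$, then yields the required bound; it applies to our $v$ because the underlying angular-mode argument does not require a boundary condition at $|z-\gamma|=\delta_m$, only that $v$ vanish near $\gamma$.

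The main technical point I anticipate is the construction of $\phi$. One has to check that $\a_\omega-\a_{\alpha_\gamma,\gamma}$ extends smoothly through $\gamma$ (which follows from the explicit Mittag--Leffler formula for $\phi_\omega$, whose contributions from poles $\gamma'\neq\gamma$ are holomorphic on $B_\gamma(\delta_m)$) and that its circulation around $\gamma$ vanishes (which holds because $\a_\omega$ and $\a_{\alpha_\gamma,\gamma}$ carry the same flux $2\pi\alpha_\gamma$ across any small loop about $\gamma$, so Stokes' theorem gives zero). Once these routine checks are in hand, the rest of the argument is straightforward bookkeeping.
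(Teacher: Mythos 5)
Your proposal is correct and follows essentially the same route as the paper: both split the magnetic quadratic form into two equal halves, bound one half pointwise by the diamagnetic inequality $|(\tfrac{1}{i}\nabla-\a_\omega)u|\geq|\nabla|u||$, bound the other half by applying Lemma~\ref{laptev_weidl} (after a local gauge reduction via Lemma~\ref{gauge}) on each ball $B_\gamma(\delta_m)$ and summing, and then invoke the min--max principle. The only real difference is cosmetic: you spell out the construction of the local gauge factor $e^{i\phi}$ and remark that the radial Hardy estimate underlying Laptev--Weidl needs no condition at $|z-\gamma|=\delta_m$, points which the paper leaves implicit in its references to Lemmas~\ref{gauge} and~\ref{laptev_weidl}.
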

\begin{proof}
By using Lemma \ref{gauge} and Lemma \ref{laptev_weidl},
we have
\begin{equation}
 \label{lw2}
 \int_{m + Q_0} \left| \left(
 \frac{1}{i}\nabla - \a_\omega\right) u(z)\right|^2 dx
 \geq
 \int_{m + Q_0}
 V_\omega(z) |u(z)|^2 dx dy
\end{equation}
for any $u\in C_0^\infty(\mathbb{R}^2\setminus \Gamma_\omega)$
and any $m\in \mathbb{Z}^2$.
Notice that\footnote{
The equality
(\ref{absolute}) makes sense
for $u\in H^{1,1}_{\rm loc}(\mathbb{R}^2)$; see e.g.\
\cite[appendix]{L-S}.
}
\begin{equation}
\label{absolute}
\nabla |u| = \Re (\sgn \bar{u} \nabla u)
 = \Re (\sgn \bar{u} (\nabla - i\a_\omega )u)\qquad
\mbox{a.e.}
\end{equation}
holds for $u\in C_0^\infty(\mathbb{R}^2\setminus \Gamma_\omega)$,
where
$\sgn z = z/|z|$ for $z\neq 0$ and $\sgn 0 = 0$.
Taking the absolute value of the both sides, we have
\begin{equation}
\label{lw3}
 \left|\left(\frac{1}{i}\nabla - \a_\omega\right) u\right|^2
 \geq \Bigl|\nabla |u|\Bigr|^2\qquad
\mbox{a.e.}
\end{equation}
By (\ref{lw2}) and (\ref{lw3}), 
we have
\begin{equation}
 \label{lw4}
 \int_{Q_k}\left|
 \left(\frac{1}{i}\nabla - \a_\omega\right)u
\right|^2 dx
\geq \frac{1}{2}
 \int_{Q_k}
 \left(\Bigl|\nabla |u|\Bigr|^2
+ V_\omega\left|u\right|^2
\right) dx
\end{equation}
for any $u\in C_0^\infty (\mathbb{R}^2\setminus \Gamma_\omega)$.
Then the conclusion follows immediately from (\ref{lw4}) and
the min-max principle.
\end{proof}

\section{Proof of Theorem \ref{theorem_main}}
By virtue of Lemma \ref{laptev_weidl2},
we can prove Theorem \ref{theorem_main}
as in the scalar potential case
(this idea is taken from Nakamura's paper \cite{N1}).
The proof below is based on Stollmann's book \cite{S1}.

\begin{lemma}
\label{rough_estimate}
 There exists a positive constant $C_6$ 
independent of $\Gamma_\omega$, $\alpha(\omega)$, E,
and $k$ such that
\begin{displaymath}
 N_{\omega,N}^k(E)
 \leq C_6 |Q_k|
\end{displaymath}
for any $E \leq 1$ and any non-negative integer $k$.
\end{lemma}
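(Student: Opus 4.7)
The plan is to upgrade the first-eigenvalue comparison in Lemma \ref{laptev_weidl2} to a counting-function comparison, and then bound the Neumann-Laplacian eigenvalues on the square $Q_k$ directly by Weyl-type counting.

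The starting point is the form inequality (\ref{lw4}), valid on $C_0^\infty(\mathbb{R}^2\setminus\Gamma_\omega)$, which rewritten as quadratic forms says
$$q_\omega[u] \geq \frac{1}{2} q_V[u],$$
where $q_\omega$ is the quadratic form of $H_{\omega,N}^k$ and $q_V$ is the quadratic form of $-\Delta_N^k+V_\omega$. Since the form domain of $H_{\omega,N}^k$ is obtained by closure from the smooth functions that vanish near $\Gamma_\omega$ (and satisfy Neumann conditions on $\partial Q_k$) and the closure is contained in the form domain of $-\Delta_N^k+V_\omega$, the min-max principle applied term-by-term yields
$$E_j(H_{\omega,N}^k)\geq E_j\!\left(\tfrac{1}{2}\bigl(-\Delta_N^k+V_\omega\bigr)\right)\qquad (j=1,2,\ldots),$$
which immediately gives the counting-function inequality
$$N_{\omega,N}^k(E)\leq \#\Bigl\{j:E_j\!\bigl(\tfrac{1}{2}(-\Delta_N^k+V_\omega)\bigr)\leq E\Bigr\}.$$

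Next I drop the potential: since $V_\omega\geq 0$, the min-max principle gives $E_j(\tfrac{1}{2}(-\Delta_N^k+V_\omega))\geq \tfrac{1}{2}E_j(-\Delta_N^k)$, so
$$N_{\omega,N}^k(E)\leq \#\bigl\{j:E_j(-\Delta_N^k)\leq 2E\bigr\}.$$
The Neumann eigenvalues of $-\Delta_N^k$ on $Q_k$ are $\pi^2(n_1^2+n_2^2)/(2k+1)^2$ with $n_1,n_2\in\mathbb{Z}_{\geq 0}$. For $E\leq 1$ the condition $E_j\leq 2E$ becomes $n_1^2+n_2^2\leq 2(2k+1)^2/\pi^2$, so the number of lattice points satisfying this inequality is at most a universal constant times $(2k+1)^2=|Q_k|$, by an elementary lattice-point count (the points lie in a quarter disc of radius $O(2k+1)$).

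Combining these two steps yields $N_{\omega,N}^k(E)\leq C_6|Q_k|$ with $C_6$ independent of $\Gamma_\omega$, $\alpha(\omega)$, $E\leq 1$, and $k$. The only subtlety I expect is the form-domain step: one must check that the form-inequality (\ref{lw4}), proved on the core $C_0^\infty(\mathbb{R}^2\setminus\Gamma_\omega)$, passes to the full form domain of $H_{\omega,N}^k$ so that the min-max comparison applies to every eigenvalue. This is routine because $C_0^\infty(Q_k\setminus\Gamma_\omega)$ is a form core for $H_{\omega,N}^k$ (this is how the Friedrichs/Neumann extension is defined in the paper) and the form $q_V$ on the right is continuous with respect to the form topology induced by $q_\omega$, so the inequality extends by density.
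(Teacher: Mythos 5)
Your argument breaks at the very first step. The inequality (\ref{lw4}) does not say $q_\omega[u]\ge\tfrac{1}{2}q_V[u]$; it says
\begin{displaymath}
\int_{Q_k}\left|\left(\tfrac{1}{i}\nabla-\a_\omega\right)u\right|^2dx
\geq\frac{1}{2}\int_{Q_k}\Bigl(\bigl|\nabla|u|\bigr|^2+V_\omega|u|^2\Bigr)dx,
\end{displaymath}
i.e.\ $q_\omega[u]\ge\tfrac{1}{2}q_V[|u|]$. The gradient on the right acts on $|u|$, not on $u$, because the passage from the magnetic form to the non-magnetic one uses the pointwise diamagnetic bound (\ref{lw3}), which necessarily involves the nonlinear map $u\mapsto|u|$. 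Since $\||u|\|=\|u\|$ this still gives the ground-state comparison $E_1(H_{\omega,N}^k)\ge E_1\bigl(\tfrac{1}{2}(-\Delta_N^k+V_\omega)\bigr)$, which is exactly Lemma \ref{laptev_weidl2}; but it does \emph{not} give $E_j(H_{\omega,N}^k)\ge E_j\bigl(\tfrac{1}{2}(-\Delta_N^k+V_\omega)\bigr)$ for $j\ge2$. In the min-max characterization of $E_j$ you must take a supremum over a $j$-dimensional subspace $W$, and the image $\{|u|:u\in W\}$ is a cone, not a subspace, so the supremum cannot be matched against a $j$-dimensional trial subspace for the comparison operator. (This is the familiar fact that the diamagnetic inequality controls semigroups, heat kernels, traces, and Hilbert--Schmidt norms, but does not yield eigenvalue-by-eigenvalue monotonicity: for a strong constant field the Landau levels are all pushed up to $B$ with infinite multiplicity, and individual $E_j$ do not dominate the corresponding free Neumann eigenvalues.) Your subsequent step $E_j(\tfrac12(-\Delta_N^k+V_\omega))\ge\tfrac12 E_j(-\Delta_N^k)$ and the lattice-point count are fine, but they rest on the unproved counting-function comparison.

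The paper sidesteps this by working at the level of operators rather than individual eigenvalues: the diamagnetic inequality gives the pointwise kernel bound $|(H_{\omega,N}^k+1)^{-1}(x,y)|\le(-\Delta_N^k+1)^{-1}(x,y)$, hence $\|(H_{\omega,N}^k+1)^{-1}\|_{\mathcal{I}_2}\le\|(-\Delta_N^k+1)^{-1}\|_{\mathcal{I}_2}$, and then
\begin{displaymath}
N_{\omega,N}^k(E)\le\tr\,4(H_{\omega,N}^k+1)^{-2}=4\|(H_{\omega,N}^k+1)^{-1}\|_{\mathcal{I}_2}^2\le4\|(-\Delta_N^k+1)^{-1}\|_{\mathcal{I}_2}^2,
\end{displaymath}
with the last Hilbert--Schmidt norm computed explicitly and shown to be $O(|Q_k|)$. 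To repair your proof you would need to replace the eigenvalue-by-eigenvalue comparison with a trace, heat-kernel, or Hilbert--Schmidt comparison of this kind.
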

\begin{proof}
We use the diamagnetic inequality
\begin{equation}
 \label{diamagnetic}
 |(H_{\omega,N}^k + \lambda)^{-1} u |(x) 
\leq (-\Delta_N^k +\lambda)^{-1}|u|(x)\qquad\mbox{a.e.}
\end{equation}
for any $\lambda>0$ and $u\in L^2(Q_k)$,
where $\Delta_N^k$ is the Neumann Laplacian 
on $Q_k$
\footnote{
The diamagnetic inequality on the whole plane
is proved for $\mathbf{a}\in L^2_{\rm loc}(\mathbb{R}^2;\mathbb{R}^2)$
by Leinfelder--Simader \cite[Lemma 6]{L-S},
and for the Aharonov-Bohm field
by Melgaard--Ouhabaz--Rozenblum 
\cite[Theorem 1.1]{M-O-R}.
The diamagnetic inequality (\ref{diamagnetic})
for the Schr\"odinger operators with 
the Neumann boundary conditions
can be proved similarly,
by slightly changing the proof 
as the functions in the proof belong to the appropriate 
form/operator domain
(a similar argument is seen in 
Doi--Iwatsuka--Mine \cite[Proposition 3.2]{D-I-M}).
}.

Taking $u$ as an approximation of 
the Dirac measure in (\ref{diamagnetic}), 
we have
\begin{displaymath}
\left|(H_{\omega,N}^k + \lambda)^{-1}(x,y)\right| 
 \leq 
 (-\Delta_N^k + \lambda)^{-1}(x,y)\qquad\mbox{a.e.},
\end{displaymath}
where $T(x,y)$ denotes the integral kernel of an integral operator $T$.
This implies
\begin{equation}
\label{rough1}
 \|(H_{\omega,N}^k + \lambda)^{-1}\|_{{\cal I}_2}
 \leq
 \|(-\Delta_N^k + \lambda)^{-1}\|_{{\cal I}_2},
\end{equation}
where $\|\cdot\|_{{\cal I}_2}$ denotes
the Hilbert-Schmidt norm.
Let $l=2k+1$ be the length of the edge of $Q_k$.
By (\ref{rough1}), we have for any $E\leq 1$
\begin{eqnarray*}
 N_{\omega,N}^k(E) 
 &\leq& \tr \chi_{(-\infty, 1]}(H_{\omega,N}^k)\\
 &\leq & \tr 4(H_{\omega,N}^k + 1)^{-2}\\
 &=& 4 \|(H_{\omega,N}^k + 1)^{-1}\|_{{\cal I}_2}^2\\
 &\leq& 4 \|(-\Delta_N^k + 1)^{-1}\|_{{\cal I}_2}^2\\
 &=&  4|Q_k| \sum_{m=0}^\infty\sum_{n=0}^\infty
        F\left(\frac{m}{l}, \frac{n}{l}\right)\frac{1}{l^2},
\end{eqnarray*}
where $F(x,y)= 1/(\pi^2(x^2+y^2)+1)^2$. The double sum in the
last expression converges to
\begin{displaymath}
 \int_0^\infty \int_0^\infty F(x,y)dx <\infty
\end{displaymath}
as $k\rightarrow \infty$, 
so it is bounded with respect to $k$.
Thus we have the conclusion.
\end{proof}

It is known that
\begin{displaymath}
 N(E) = \inf_{k \geq 1} \frac{1}{|Q_k|}
        \mathbf{E}\left[N_{\omega,N}^k(E)\right]
\end{displaymath}
holds almost surely (see \cite[VI.1.3]{C-L}).
Let $E_1(H_{\omega,N}^k)$ be the smallest eigenvalue of $H_{\omega,N}^k$,
and $\chi(\omega)$ the characteristic function
of the event `$E_1(H_{\omega,N}^k) \leq E$'.
Then we have for every $k\geq 1$ and $E\leq 1$
\begin{eqnarray}
  N(E) &\leq& \frac{1}{|Q_k|} \mathbf{E}[N_{\omega,N}^k(E)]\nonumber\\
       &=& \frac{1}{|Q_k|} 
           \mathbf{E}[N_{\omega,N}^k(E)\chi(\omega)]\nonumber\\
       &\leq& C_6 \mathbf{P}\{E_1(H_{\omega,N}^k) \leq E\}\nonumber\\
       &\leq& C_6 \mathbf{P}\left\{E_1\left(
\frac{1}{2}(-\Delta_N^k + V_\omega)
\right) \leq E\right\},
\label{prmain1.5}
\end{eqnarray}
where we used 
Lemma \ref{rough_estimate}
in the second inequality,
and  Lemma \ref{laptev_weidl2}
in the last inequality.

For $t\in [-1,1]$,
let $E_1(\omega,t)$
be the lowest eigenvalue of
$\left(-\Delta_N^k + t V_\omega\right)/2$,
and $\phi(\omega, t)$ the normalized eigenfunction
corresponding to $E_1(\omega,t)$.
In particular, $E_1(\omega, 0)=0$ and
$\phi(\omega, 0) = {1}/{\sqrt{|Q_k|}}$.
Since $E_1(\omega,0)$ is a simple eigenvalue,
we can assume $E_1(\omega,t)$ and $\phi(\omega,t)$
are differentiable at $t=0$
by the analytic perturbation theory
\cite{K}.
By the Feynman-Hellmann theorem
\cite[Theorem 4.1.29]{S1}, we have
\begin{equation}
\label{prmain2}
 E_1'(\omega,0)
 = \frac{1}{2} (V_\omega \phi(\omega,0), \phi(\omega,0))
 = \frac{1}{2|Q_k|}\int_{Q_k} V_\omega(z) dx,
\end{equation}
where $'$ denotes the derivative
with respect to $t$.
For $n \in \mathbb{Z}^2$,
put
\begin{displaymath}
\beta_n(\omega)=
\frac{1}{2} \int_{n+Q_0}V_\omega(z)dx.
\end{displaymath}
Then, the random variables 
$\{\beta_n\}_{n\in \mathbb{Z}^2}$
are i.i.d.\ and we have by (\ref{prmain2})
\begin{equation}
\label{derivative}
  E_1'(\omega,0)
 = \frac{1}{|Q_k|} \sum_{n\in Q_k \cap \mathbb{Z}^2}
   \beta_n(\omega).
\end{equation}

\begin{lemma}
Let 
\begin{displaymath}
 s_0 = -\frac{1}{2} \log \mathbf{E}[\exp(-\beta_0(\omega))].
\end{displaymath}
 Then, we have
\begin{equation}
 \label{prmain3}
\mathbf{P}\left\{
\frac{1}{|Q_k|} \sum_{n\in Q_k \cap \mathbb{Z}^2}
   \beta_n(\omega) \leq s_0
\right\}
\leq e^{- s_0 |Q_k|}.
\end{equation}
\end{lemma}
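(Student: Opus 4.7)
The plan is to prove this as a standard one-sided Chernoff/Cramér large deviation bound, exploiting that the $\beta_n$ are non-negative i.i.d.\ random variables. Since $V_\omega \geq 0$ by definition, we have $\beta_n(\omega) \geq 0$, so $e^{-\beta_0(\omega)} \in (0,1]$ and $\mathbf{E}[e^{-\beta_0}]$ is a well-defined number in $(0,1]$; in particular $s_0 \geq 0$, and (as the proof of the Lifshitz tail will later need) positivity of $s_0$ will follow from the fact that $\beta_0 > 0$ with positive probability under Assumption \ref{assumptions}(iv).

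First I would let $S_k(\omega) = \sum_{n \in Q_k \cap \mathbb{Z}^2} \beta_n(\omega)$ and rewrite the event in the statement as $\{S_k \leq s_0 |Q_k|\}$. For any $t > 0$, Markov's inequality applied to the non-negative random variable $e^{-tS_k}$ gives
\begin{equation*}
\mathbf{P}\{S_k \leq s_0 |Q_k|\} = \mathbf{P}\{e^{-tS_k} \geq e^{-ts_0|Q_k|}\} \leq e^{ts_0|Q_k|}\, \mathbf{E}[e^{-tS_k}].
\end{equation*}

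Next I would use that $\{\beta_n\}_{n \in \mathbb{Z}^2}$ are i.i.d.\ (by Assumption \ref{assumptions}(ii),(iii) together with the fact that $V_\omega$ restricted to $n+Q_0$ depends only on $\Gamma_\omega \cap (n+Q_0)$ and the associated fluxes). Since $Q_k$ is a square of edge length $2k+1$ centered at the origin, it contains exactly $(2k+1)^2 = |Q_k|$ lattice points, so
\begin{equation*}
\mathbf{E}[e^{-tS_k}] = \prod_{n \in Q_k \cap \mathbb{Z}^2} \mathbf{E}[e^{-t\beta_n}] = \mathbf{E}[e^{-t\beta_0}]^{|Q_k|}.
\end{equation*}

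Finally I would choose $t = 1$ and invoke the definition $s_0 = -\tfrac{1}{2}\log \mathbf{E}[e^{-\beta_0}]$, equivalently $\mathbf{E}[e^{-\beta_0}] = e^{-2s_0}$. Substituting yields
\begin{equation*}
\mathbf{P}\{S_k \leq s_0 |Q_k|\} \leq e^{s_0 |Q_k|} \cdot e^{-2 s_0 |Q_k|} = e^{-s_0 |Q_k|},
\end{equation*}
which is (\ref{prmain3}). There is no real obstacle here; the only points worth flagging are verifying the i.i.d.\ property of the $\beta_n$ from the block-independence in Assumption \ref{assumptions} and noting that the optimizer $t=1$ is essentially dictated by the particular definition of $s_0$ (which is tailored to make this one-step Chernoff bound exact rather than requiring an additional optimization in $t$).
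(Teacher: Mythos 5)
Your proof is correct and takes essentially the same route as the paper: the paper's inequality $\mathbf{P}\{S_k \leq s_0|Q_k|\} \leq \mathbf{E}[\exp(s_0|Q_k| - S_k)]$ is precisely the $t=1$ Chernoff/Markov step you wrote out, followed by factorizing the expectation over the i.i.d.\ $\beta_n$ and using $\mathbf{E}[e^{-\beta_0}] = e^{-2s_0}$. The only cosmetic difference is that you phrase it as Markov applied to $e^{-tS_k}$ with the explicit (trivial) choice $t=1$, while the paper folds the exponential tilt directly into the indicator bound.
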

\begin{proof}
By (iv) of Assumption \ref{assumptions},
the random variable
$\beta_n(\omega)$ takes positive value
with positive probability.
Thus $s_0$ is positive and
\begin{eqnarray*}
(\mbox{l.h.s. of (\ref{prmain3})})
&=&
\mathbf{P}\left\{
\sum_{n\in Q_k \cap \mathbb{Z}^2}
   \beta_n(\omega)
 \leq s_0 |Q_k|
\right\}\\ 
&\leq&
\mathbf{E}
\left[
\exp\left(
  s_0 |Q_k| 
- 
\sum_{n\in Q_k \cap \mathbb{Z}^2}
   \beta_n(\omega)
\right)
\right]\\
&=&
e^{s_0 |Q_k|}
\left(
\mathbf{E}
[e^{-\beta_0(\omega)}]
\right)^{|Q_k|}
= e^{-s_0|Q_k|},
\end{eqnarray*}
where we used the independence of the random variables
$\{\beta_n\}_{n\in \mathbb{Z}^2}$.
\end{proof}

The interval between the lowest two eigenvalues
of $-\Delta_N^k/2$ is 
$\left({\pi}/(2k+1)\right)^2/2$.
Notice that $\|V_\omega/2\|_\infty\leq 1/2$.
By the analytic perturbation theory
\cite[Theorem 4.1.30]{S1},
we see that 
the eigenvalue $E_1(\omega,z)$ can be extended 
analytically in the region 
$\{z\in \mathbb{C}\mid |z| < R/2\}$,
where
\begin{displaymath}
R = \left(\frac{\pi}{2k+1}\right)^2.
\end{displaymath}
Moreover, 
\begin{equation}
\label{prmain4}
 |E_1(\omega, z)|< 
 \frac{1}{4}\left( \frac{\pi}{2k+1}\right)^2
\end{equation}
for $|z| < R/2$.

\begin{lemma}
 \label{prmain5}
Put $C_7=2/\pi^2$.
Then
\begin{equation}
 \left|
  E_1(\omega, t) - t E_1'(\omega, 0)
\right| \leq 
C_7 |Q_k| t^2
\end{equation}
for every real $t$
with $|t|< {R}/{2}$.
\end{lemma}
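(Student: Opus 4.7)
The plan is to exploit the analyticity of $E_1(\omega,z)$ on the disc $\{|z|<R/2\}$ together with the uniform bound (\ref{prmain4}), and to convert these into a quadratic remainder estimate by a standard maximum modulus argument. Recall that $|Q_k|=(2k+1)^2$ so that $C_7|Q_k|=2(2k+1)^2/\pi^2=2/R$; hence the target inequality is simply $|E_1(\omega,t)-tE_1'(\omega,0)|\le (2/R)\,t^2$ for $|t|<R/2$.

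First I would bound $|E_1'(\omega,0)|$. Since $E_1(\omega,z)$ is analytic on $|z|<R/2$ with $|E_1(\omega,z)|<R/4$ there, Cauchy's estimate on a circle of radius $r<R/2$ yields $|E_1'(\omega,0)|\le (R/4)/r$, and letting $r\uparrow R/2$ gives the clean bound $|E_1'(\omega,0)|\le 1/2$. Next, define
\begin{displaymath}
 h(z)=E_1(\omega,z)-zE_1'(\omega,0).
\end{displaymath}
Because $E_1(\omega,0)=0$, the function $h$ satisfies $h(0)=h'(0)=0$, so $g(z)=h(z)/z^2$ extends to a holomorphic function on $|z|<R/2$.

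Then I would apply the maximum modulus principle to $g$ on the disc $|z|\le r$ with $|t|<r<R/2$. On the circle $|z|=r$ one has
\begin{displaymath}
 |h(z)|\le |E_1(\omega,z)|+r|E_1'(\omega,0)|\le \frac{R}{4}+\frac{r}{2},
\end{displaymath}
hence $|g(z)|\le (R/4+r/2)/r^2$ on $|z|=r$, and therefore $|g(t)|\le (R/4+r/2)/r^2$ for $|t|<r$. Letting $r\uparrow R/2$ produces $|g(t)|\le (R/2)/(R/2)^2=2/R$, which gives exactly $|h(t)|\le (2/R)\,t^2=C_7|Q_k|\,t^2$, as desired.

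There is no real obstacle here; the only point requiring a bit of care is that (\ref{prmain4}) gives only a strict bound on the open disc, so the Cauchy estimate must be stated on a circle of radius $r<R/2$ and then $r$ sent to $R/2$. The rest is pure maximum modulus. The main conceptual ingredient — the quantitative analytic perturbation bound (\ref{prmain4}) for the isolated simple ground state of $(-\Delta_N^k+tV_\omega)/2$ — is already in hand from the preceding paragraph, so this lemma is genuinely a short complex-analytic corollary.
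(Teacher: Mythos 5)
Your proof is correct, and it reaches the same bound by a genuinely different complex-analytic route. The paper writes $E_1(\omega,t)-tE_1'(\omega,0)$ as the Lagrange remainder $\tfrac{1}{2}E_1''(\omega,\xi)t^2$ for some $\xi$ between $0$ and $t$, and then estimates $E_1''(\omega,\xi)$ by the Cauchy integral formula over a circle of radius $(1-\epsilon)R$, evaluating the limit $\epsilon\to 0$ to produce the constant $2/R=C_7|Q_k|$. Your argument instead factors out the double zero at the origin, setting $g(z)=\bigl(E_1(\omega,z)-zE_1'(\omega,0)\bigr)/z^2$, and applies the maximum modulus principle on circles $|z|=r<R/2$; the auxiliary Cauchy estimate $|E_1'(\omega,0)|\le 1/2$ feeds into the boundary bound for $g$, and letting $r\uparrow R/2$ yields $|g|\le 2/R$. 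The two approaches give identical constants, but yours has a technical advantage: it stays strictly inside the disc $\{|z|<R/2\}$ on which analyticity and the bound $|E_1(\omega,z)|<R/4$ from (\ref{prmain4}) are actually asserted, whereas the paper's contour radius $(1-\epsilon)R$ sits outside that disc and its Cauchy estimate also has to control the distance from $\xi$ (which can be arbitrarily close to $R/2$) to the contour. Your Schwarz-lemma-style rewriting sidesteps both of those book-keeping issues, and the passage to the limit $r\uparrow R/2$ is handled carefully (you correctly note that (\ref{prmain4}) is an open-disc bound, so the estimates must be taken on interior circles and then optimized). This is a sound, self-contained alternative proof.
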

\begin{proof}
Since $E_1(\omega, 0)=0$,
$E_1(\omega, t) - t E_1'(\omega, 0)$
equals to 
the remainder term
$(E''(\omega,\xi)/2) t^2$
($|\xi|<|t|$)
in the Taylor expansion.
By the Cauchy integral formula
and (\ref{prmain4}),
we have
for $|t|<{R}/{2}$ and small $\epsilon>0$
\begin{eqnarray*}
  \left|\frac{E_1''(\omega,\xi)}{2}\right|
 &\leq& 
\frac{1}{2\pi}
\int_{|z|=(1-\epsilon)R}
 \left|
 \frac{E_1(\omega,z)}{(\xi - z)^3}
\right| |dz|\\
&\leq&
\frac{1}{4}\left(\frac{\pi}{2k+1}\right)^2
\frac{(1-\epsilon)R}{\left((1/2-\epsilon)R\right)^3}.
\end{eqnarray*}
Taking $\epsilon\rightarrow 0$, 
the right hand side converges to
\begin{displaymath}
 \frac{2}{R^2}\left(\frac{\pi}{2k+1}\right)^2
=
C_7|Q_k|.
\end{displaymath}
Thus we have the conclusion.
\end{proof}

Let $b$ be a small positive number, which will be determined later.
Assume the event
\begin{displaymath}
 E_1(\omega,1) \leq  \frac{b}{(2k+1)^2}
\end{displaymath}
occurs. Since $E_1(\omega,t)$ is monotone non-decreasing
with respect to $t$,
we have $ E_1(\omega,t) \leq  {b}/{(2k+1)^2}$
for $|t|\leq 1$.
By Lemma \ref{prmain5}, we have
\begin{eqnarray}
 E_1'(\omega,0 ) 
&\leq&
\frac{1}{t} E_1(\omega, t)
+ C_7 (2k+1)^2 t\nonumber\\
&\leq&
\frac{b}{t} \frac{1}{(2k+1)^2}
+ C_7 (2k+1)^2 t
\label{prmain7}
\end{eqnarray}
for $|t|<{R}/{2}$.
The right hand side of (\ref{prmain7})
takes the minimal value $2\sqrt{b C_7}$
at $t = t_0 = \sqrt{b/C_7}(2k+1)^{-2}$.
Take $b$ so small that
\begin{displaymath}
 2 \sqrt{b C_7}\leq s_0,\quad
t_0 = \sqrt{\frac{b}{C_7}}(2k+1)^{-2}<\frac{R}{2}=\frac{\pi^2}{2}(2k+1)^{-2}.
\end{displaymath}
Then we have from (\ref{prmain7}) with $t=t_0$
\begin{displaymath}
 E_1'(\omega, 0) \leq  s_0.
\end{displaymath}
This inequality, (\ref{derivative}) and (\ref{prmain3})
implies
\begin{eqnarray*}
 \mathbf{P}\left\{
 E_1(\omega, 1) \leq \frac{b}{(2k+1)^2}
\right\}
 &\leq&
\mathbf{P}\{ E_1'(\omega, 0) \leq s_0\}\\
&\leq&
 e^{-s_0 |Q_k|}.
\end{eqnarray*}
This inequality and (\ref{prmain1.5})
implies the conclusion of Theorem \ref{theorem_main}.
\vspace{2mm}

\textbf{Acknowledgments.}
The authors thank to the referee for many helpful comments.
The work of T.\ M.\ is partially 
supported by JSPS grant Kiban C--26400165.
The work of Y.\ N.\ is partially
supported by JSPS grant Kiban C--15K04960.

%
%
%
%

\end{document}